\journal{Signal Processing}
\def\minwrt[#1]{\underset{#1}{\text{minimize }}}
\def\argminwrt[#1]{\underset{#1}{\text{arg min }}}
\def\maxwrt[#1]{\underset{#1}{\text{maximize }}}
\def\maxemphwrt[#1]{\underset{#1}{\text{\emph{maximize} }}}
\newtheorem{theorem}{Theorem}
\newtheorem{remark}{Remark}
\newtheorem{proposition}{Proposition}
\newtheorem{lemma}{Lemma}
\newtheorem{corollary}{Corollary}
\newcommand{\norm}[1]{\left\lVert#1\right\rVert}
\newcommand{\abs}[1]{\left|#1\right|}
\def\RC{{\mathbb{C}}}
\def\RN{{\mathbb{N}}}
\def\RR{{\mathbb{R}}}
\newcommand{\cov}{r}
\newcommand{\covest}{\hat{r}}
\newcommand{\asympvarcross}{\Psi_T'}
\newcommand{\asympvarautoamp}{\Psi_T}
\newcommand{\fouriervec}{a}
\newcommand{\randamp}{z}
\newcommand{\freqband}{\mathcal{I}_B}
\newcommand{\var}[1]{\mathrm{Var}\left(#1\right)}
\newcommand{\expop}{\mathbb{E}}
\newcommand{\expect}[1]{\expop\left(#1\right)}
\newcommand{\xamp}{\alpha}
\newcommand{\yamp}{\beta}
\newcommand{\freq}{\theta}
\newcommand{\kernel}{f}
\newcommand{\errorfunc}{\epsilon}
\newcommand{\asympratio}{\gamma}
\newcommand{\weakstarconv}{\stackrel{\ast}{\rightharpoonup}}
\newcommand{\kurtrat}{\kappa}
\newcommand{\jk}[1]{{\color{blue}{#1}}}
        \def\fps@eqnfloat{!t}
        \def\ftype@eqnfloat{4}
        \newenvironment{eqnfloat*}
               {\@dblfloat{eqnfloat}}
               {\end@dblfloat}
\begin{document}
\begin{frontmatter}


\title{
Mixed-Spectrum Signals -- Discrete Approximations and \\Variance Expressions for Covariance Estimates
}

\tnotetext[t1]{This work was supported in part by the Swedish Research Council grant 2020-03454.}
\author[KUL]{Filip Elvander\corref{cor1}} 
\ead{firstname.lastname@esat.kuleuven.be}

\author[KTH]{Johan Karlsson}
\ead{firstname.lastname@math.kth.se}

\address[KUL]{Stadius Center for Dynamical Systems, Signal Processing and Data Analytics, KU Leuven, Leuven, Belgium}
\address[KTH]{Department of Mathematics, KTH Royal Institute of Technology, Stockholm, Sweden}

\begin{abstract}
The estimation of the covariance function of a stochastic process, or signal, is of integral importance for a multitude of signal processing applications. In this work, we derive closed-form expressions for the variance of covariance estimates for mixed-spectrum signals, i.e., spectra containing both absolutely continuous and singular parts. The results cover both finite-sample and asymptotic regimes, allowing for assessing the exact speed of convergence of estimates to their expectations, as well as their limiting behavior. As is shown, such covariance estimates may converge even for non-ergodic processes. Furthermore, we consider approximating signals with arbitrary spectral densities by sequences of singular spectrum, i.e., sinusoidal, processes, and derive the limiting behavior of covariance estimates as both the sample size and the number of sinusoidal components tend to infinity. We show that the asymptotic regime variance can be described by a time-frequency resolution product, with dramatically different behavior depending on how the sinusoidal approximation is constructed. In a few numerical examples we illustrate the theory and the corresponding  implications for direction of arrival estimation. 

\end{abstract}
\begin{keyword}
Covariance estimation, signal approximation, spectral analysis, array processing, broad-band signal processing
\end{keyword}
\end{frontmatter}

\section{Introduction}
Modeling signals that impinge on sensor arrays appear in a large variety of signal processing applications, including radar, sonar, and audio signal processing \cite{KrimV96,VanTrees02,GannotVMGO17_25}.
Commonly in such applications, one seeks a spatial spectrum, describing the distribution of signal energy over the space of interest, e.g., azimuth and elevation in direction of arrival (DoA) estimation \cite{TrinhVP20_68}, allowing for localizing and tracking targets \cite{ElvanderJK18_66,ElvanderHJK20_171} or for performing spatial filtering of the sensor signals \cite{AliWM19_27}. In practice, the spatial spectrum is often inferred from the array covariance matrix as in, e.g., optimal filtering such as the Capon method \cite{Capon69}, subspace methods as ESPRIT and MUSIC \cite{PaulrajKRasilomar85,Schmidt79}, as well as more recent contributions exploiting sparse representations \cite{StoicaBL11_59b} as well as knowledge of underlying dynamics \cite{ElvanderHJK20_171}. 

Commonly, it is assumed that the impinging signals are narrowband, or that they may be decomposed into narrowband components through filtering or by the use of short time Fourier transforms \cite{Bohme86_10}, and that time delays may be described as phase shifts of the source signal waveform \cite{StoicaM05}. 
Assuming that the impinging signals are uncorrelated, this then induces a low-rank structure in the array covariance matrix, which is exploited in estimation of the spatial spectrum, e.g., using the Caratheodory-Fejer theorem for Toeplitz matrices in the case of uniform linear arrays \cite{GrenanderS58,ElvanderJK18_66}. The success of covariance based approaches are thus dependent on the availability of accurate estimates of the array covariance matrix, and, in particular, on the speed of convergence of finite-sample estimators to their expectation. 
Typically it is assumed that a large number of independent samples are available for estimating the covariance \cite{Jaffer88,StoicaN89_37,StoicaN95_14}. However, the narrowband assumption would imply that the samples are highly correlated also over large time horizons. In particular, perfectly narrowband signals, i.e., signals whose spectra have support of measure zero, are not ergodic and exhibit no mixing. In practice, the signals may be band-limited but with non-zero bandwidth, and a relevant question is then how the spectral properties affect the accuracy of the covariance estimates for this class of signals. 
For signals decomposable as a finite sum of fixed magnitude sinusoids and a moving average process, the asymptotic normality of the array sample covariance matrix was proved in \cite{Delmas01_47}, with the asymptotic performance of frequency estimation algorithms being presented in \cite{Delmas02_50,DelmasM03_83}.
However, to the best of the authors' knowledge, no finite-sample results for the accuracy of covariance estimates for signals with general spectra exist in the signal processing literature. In particular, there are no widely available results on the dependence of finite-sample accuracy on the impinging signals' spectra.

In this work, we consider the problem of covariance estimation for signals with mixed spectra, consisting of a Gaussian part with a spectral density and a singular part. Specifically, we derive closed form expressions for the finite-sample variance of the covariance estimates, allowing for exactly quantifying the speed of convergence for estimates to their expected values. As is shown, the properties of the covariance estimates vary considerably depending on the distribution, and specifically its kurtosis, of the random amplitudes of the sinusoidal components used for modeling the singular parts of the spectrum.
In particular, we show that a model with fixed magnitudes and random phases is the only model with circular symmetric components that yields statistically consistent covariance estimates.
Furthermore, for Gaussian processes with arbitrary spectral densities, we consider utilizing sinusoidal components for constructing singular spectrum approximations. For a general class of amplitudes with circularly symmetric distributions, we show that these approximations converge in distribution to the target process as the spacing of frequency grid goes to zero.
When estimating covariances from such a signal, a relevant limit is when both the time interval and the number of sinusoidal component in the approximation tends to infinity. We give an explicit expression for the asymptotic variance for this case and note that the variance is a function of the product of the time window and the frequency resolution.
As we show, depending on the kurtosis of the distribution of the component amplitudes, the covariance estimates for such singular approximations have variances that either upper or lower bound that of the target process.
In particular, this implies that evaluating, e.g., direction of arrival estimators that make use of second-order moments based on data simulated from such sinusoidal approximations may yield results that are not representative for such estimators' performance on data generated by processes with spectral densities. We derive conditions for when such singular approximations perfectly mimic the target process, thereby allowing for implementing any array processing scenario with arbitrary spectra.
\section{Signal model}
Consider two scalar wide-sense stationary (WSS) zero-mean complex circularly symmetric stochastic processes $x$ and $y$ on the real line with power spectra $d\mu_x$ and $d\mu_y$, and cross spectrum $d\mu_{xy}$. The covariance functions are the Fourier transforms of the spectra, and thus given by
\begin{equation} \label{eq:cov_funcs}
\begin{aligned}
	\cov_x(\tau)&\triangleq\expop\left( x(t)\overline{x(t-\tau)} \right) = \int_{-\infty}^\infty e^{i2\pi\freq \tau}d\mu_x(\freq),\\ 
\cov_y(\tau) &\triangleq \expop\left( y(t)\overline{y(t-\tau)} \right) =  \int_{-\infty}^\infty e^{i2\pi\freq \tau}d\mu_y(\freq),\\
\cov_{xy}(\tau) &\triangleq\expop\left( x(t)\overline{y(t-\tau)} \right)		 = \int_{-\infty}^\infty e^{i2\pi\freq \tau}d\mu_{xy}(\freq),
\end{aligned}
\end{equation}
for $\tau \in \RR$, where  $\overline{z}$ denotes the complex conjugate of a complex scalar $z$, and $\expop(\cdot)$ denotes the expectation operator.
In this paper we focus on band-limited signals, and we will assume that all power spectra are 
supported in the frequency 
interval $\mathcal{I}_B = [\freq_c - B/2,\freq_c+B/2]$ and
given by
\begin{subequations}\label{eq:spectra}
\begin{align}
d\mu_x(\freq) &= \Phi_x(\freq)d\freq + \sum_{k=1}^{N_x} \xamp_k^2\delta_{\freq_k^x}(\freq), \\
d\mu_y(\freq) &= \Phi_y(\freq)d\freq + \sum_{k=1}^{N_y} \yamp_k^2\delta_{\freq_k^y}(\freq),\\ 
d\mu_{xy}(\freq) &= \Phi_{xy}(\freq)d\freq.
\end{align}
\end{subequations}
Here $\Phi_x$, $\Phi_y$, and $\Phi_{xy}$ are densities in $L_1(\mathcal{I}_B)$, $\xamp_k$ and $\yamp_k$ are positive constants, and  $\delta_{\freq_k}(\freq) \triangleq \delta(\freq-\freq_k)$ where $\delta$ denotes the Dirac delta function.%
\footnote{This implies that the singular parts of $x$ and $y$ are uncorrelated.} 
Note that $\freq_c$ is the center frequency and $B$ is the bandwidth of the signals. 

Herein, we will consider the following general model for $x$ and $y$ consistent with \eqref{eq:spectra}:
\begin{subequations} \label{eq:general_model}
\begin{align}
x(t)=x_a(t) +x_s(t)= x_a(t)+\sum_{k=1}^{N_x}\randamp_k^{(x)} e^{i 2\pi\freq_k^x t}\\
y(t)=y_a(t) +y_s(t)= y_a(t)+\sum_{\jk{\ell}=1}^{N_y}\randamp_\ell^{(y)} e^{i 2\pi\freq_\ell^y t}
\end{align}
\end{subequations}
for $t\in \RR$, where $x_a$ and $y_a$ are Gaussian, band-limited, zero-mean random processes with absolutely continuous (cross)spectra $\Phi_x$, $\Phi_y$, $\Phi_{xy}$, and where $\randamp_k^{(x)}$ and $\randamp_\ell^{(y)}$ are independent, zero-mean, circularly symmetric complex random variables with finite fourth absolute moments such that
\begin{align*}
	\expect{|\randamp_k^{(x)}|^2} = \xamp_k^2 \;,\; \kurtrat_k^{(x)} \triangleq \frac{\expect{|\randamp_k^{(x)}|^4}}{\expect{|\randamp_k^{(x)}|^2}^2} < \infty,
\end{align*}
and analogously for $\randamp_\ell^{(y)}$, where $\kurtrat_k^{(x)}$ and $\kurtrat_\ell^{(y)}$ are the kurtosis parameters. Although all results presented in this paper hold for any instance of the model in \eqref{eq:general_model}, we will throughout give particular attention to two particular models displaying some unique properties in the context of approximation and covariance estimation. Specifically, we consider
\begin{align} \label{eq:random_amplitude_model}
	\randamp_k^{(x)} \sim \mathcal{CN}(0,\xamp_k^2)\;,\; \randamp_\ell^{(y)} \sim \mathcal{CN}(0,\yamp_\ell^2),
\end{align}
%
%
\begin{figure}[t]
        \centering
            \includegraphics[width=.46\textwidth]{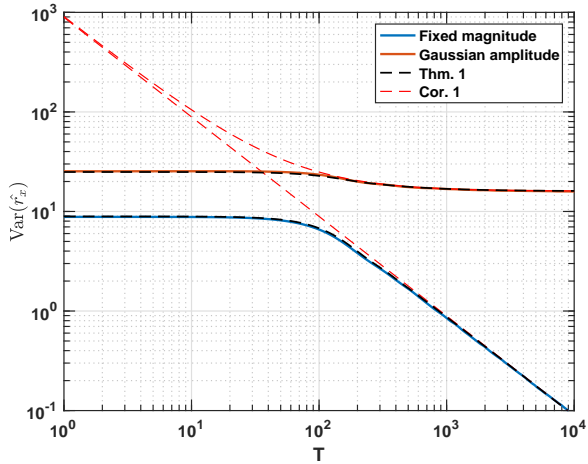}
           \caption{Empirical variance of auto-covariance estimate as a function of the measurement duration $T$ for a band-limited processes $x = x_a + x_s$, generated according to \eqref{eq:random_amplitude_model} as well as according to \eqref{eq:random_phase_model}. Also presented are finite-sample as well as asymptotic large-sample theoretical values of $\var{\covest_{x}(\tau;T)}$ according to Theorem~\ref{thm:auto_cov_general} and Corollary~\ref{cor:auto_cov_general}, respectively.} 
            \label{fig:example_autocovariance}
\vspace{-2mm}\end{figure}
%
%
%
%
%
i.e., the amplitudes are complex Gaussian random variables, as well as a fixed magnitude but random phase model
\begin{align} \label{eq:random_phase_model}
	\randamp_k^{(x)} = \xamp_k e^{i\varphi_k^{(x)}}\;,\; \randamp_\ell^{(y)} = \yamp_\ell e^{i\varphi_\ell^{(y)}},
\end{align}
where $\varphi_k^x$ are random variables with uniform distribution $U((-\pi,\pi])$ (cf. \cite{DelmasM06_86}). It may here be noted that the kurtosis parameters for these two models are $\kurtrat_k = 2$ for \eqref{eq:random_amplitude_model} and $\kurtrat_k = 1$ for \eqref{eq:random_phase_model}. For the covariance functions in \eqref{eq:cov_funcs}, consider the standard estimates
\begin{subequations}\label{eq:autocov_est}
\begin{align}
	\covest_{x}(\tau;T) &= \frac{1}{T}\int_{t=0}^T x(t)\overline{x(t-\tau)}dt ,\label{eq:autocov_est_a}\\
		\covest_{y}(\tau;T) &= \frac{1}{T}\int_{t=0}^T y(t)\overline{y(t-\tau)}dt ,\\
	\covest_{xy}(\tau;T) &= \frac{1}{T}\int_{t=0}^T x(t)\overline{y(t-\tau)}dt ,
	\end{align}
\end{subequations}
where $T$ is the averaging time. As we will see, although having the same spectra and covariance functions, processes constructed according to \eqref{eq:general_model} display considerable differences when it comes to estimating the covariances \eqref{eq:autocov_est} depending on how the distributions of the amplitudes $\randamp_k^{(x)}$ and $\randamp_\ell^{(y)}$ are chosen. In particular, we will show that the convergence, as $T \to \infty$, of the covariance estimates to their respective expectations depend on the structures of $d\mu_x$ and $d\mu_y$, as well as on kurtosis of the distribution of the amplitudes. Specifically, we are interested in under what conditions the estimators in \eqref{eq:autocov_est} are consistent estimators. A motivating example utilizing the models in \eqref{eq:random_amplitude_model} and \eqref{eq:random_phase_model} illustrating these problems is presented in the next section.
%
\begin{figure}[t]
        \centering
            \includegraphics[width=.46\textwidth]{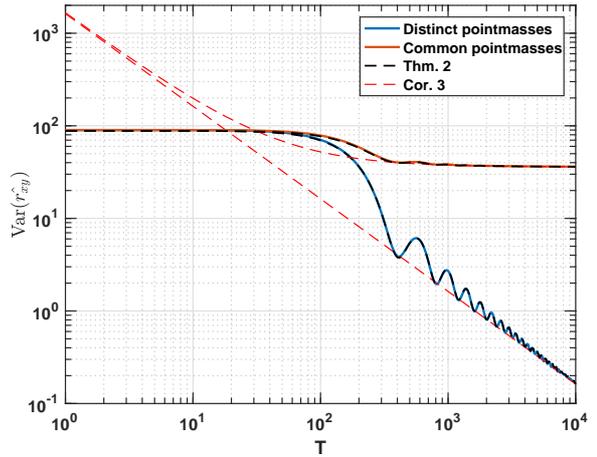}
           \caption{Empirical variance of cross-covariance estimate as a function of the measurement duration $T$ for two band-limited processes $x = x_a + x_s$ and $y = y_s$, generated according to \eqref{eq:random_phase_model}, with both distinct and common point masses. Also presented are finite-sample as well as asymptotic large-sample theoretical values of $\var{\covest_{xy}(\tau;T)}$ according to Theorem~\ref{thm:cross_cov} and Corollary~\ref{cor:asymptote}, respectively.} 
            \label{fig:example_crosscovariance}
\vspace{-2mm}\end{figure}
%
%
\section{Motivating examples}
Consider an array processing scenario in which two sources, emitting the signals $x$ and $y$ respectively, impinge on a set of sensors. Considering two of the sensors, the measured signals, $s_1$ and $s_2$, are given by\footnote{For simplicity of the exposition, but without loss of generality for the discussion, we here assume lossless propagation.}
\begin{align*}
	s_1(t) &= x\left(t-\tau_x^{(1)}\right) + y\left(t-\tau_y^{(1)}\right) \\
	s_2(t) &= x\left(t-\tau_x^{(2)}\right) + y\left(t-\tau_y^{(2)}\right),
\end{align*}
where $\tau_x^{(1)},\tau_x^{(2)},\tau_y^{(1)},\tau_y^{(2)}$ are time delays determined by the distance between the sources and the sensors. Then, in order to localize the signal sources \cite{AdlerW19_eusipco} or perform noise reduction \cite{ElvanderAJW19_eusipco}, one typically considers estimates of the array cross-covariance in order to, e.g., fit parametric models \cite{OtterstenSR98_8}. Clearly, the success of such approaches depends on the convergence of empirical moments to their theoretical counterparts, which is determined by the convergence of $\covest_x, \covest_y$, and $\covest_{xy}$ to their respective expectations.
Note that for these problems a continuous-time model of the signal is required in many cases. This is since in array signal processing, the angle or location of signal sources are continuous variables and it is thus not enough to restrict time delays $\tau_x^{(1)}, \tau_x^{(2)}, \tau_y^{(1)}, \tau_y^{(2)}$, and thereby covariance lags, to a discrete grid. Furthermore, unless the signals are perfectly narrowband, time delays cannot simply be modelled as phase shifts of the signal waveforms. 

In this setting, consider estimating the auto-covariance, $\cov_x(\tau)$, of a band-limited stochastic processes $x = x_a + x_s$ realized by the model in \eqref{eq:general_model}, where $x_s$ is generated according to \eqref{eq:random_amplitude_model} or \eqref{eq:random_phase_model}, and where $x_a$ has a flat spectral density with bandwidth $B = 10^{-2}$. Here, $x_s$ consists of a single point mass in located in the same band as the spectrum of $x_a$.
Figure~\ref{fig:example_autocovariance} displays the empirical variance, obtained in a Monte Carlo simulation study, of the standard auto-covariance estimate $\covest_{x}(\tau;T)$ as a function of the averaging time $T$. As can be seen, for the fixed magnitude model \eqref{eq:random_phase_model}, the variance tends to zero, whereas it for the Gaussian amplitude model \eqref{eq:random_amplitude_model} converges to a strictly positive number.

Next, introduce a second stochastic process $y = y_s$, independent of $x$, consisting of two sinusoidal components, and thus the spectrum consists of two point masses. Let the frequencies of the point masses belong to the same band as the spectrum of $x_a$, and consider two scenarios. Firstly, when both point masses are distinct from the point mass of $x_s$, and secondly, when one of the point masses of $y_s$ is located at the frequency corresponding to  the point mass of $x_s$. 
Figure~\ref{fig:example_crosscovariance} shows the empirical variance of the cross-covariance estimate $\covest_{xy}(\tau;T)$ for these two scenarios.
As can be seen, for the first scenario with no common point mass, the estimator variance tends to zero as $T$ increases, whereas for the second scenario with a common point mass, the variance remains bounded away from zero.
It may be noted that Figures~\ref{fig:example_autocovariance} and \ref{fig:example_crosscovariance} in addition to the empirical variances also show theoretically computed finite-sample and asymptotic values of the estimator variance. In the following section, we derive closed-form expressions for these quantities, explaining the observations from the examples.
\section{Estimating the covariance function}
As seen in the previous section, the variance of the cross-covariance estimate, i.e., $\text{Var}\left( \covest_{xy}(\tau;T) \right)$, behaves dramatically different depending on whether or not the spectra $d\mu_x$ and $d\mu_y$ have common point masses; if singular components are shared, the variance does not tend to zero, i.e., $\cov_{xy}$ cannot be consistently estimated. Furthermore, the auto-covariance function $\cov_x$ could only be consistently estimated by $\covest_x(\cdot;T)$ when the singular parts of the spectrum were modeled by fixed-amplitude components as in \eqref{eq:random_phase_model}. In order to explain this, we in Theorem~\ref{thm:auto_cov_general} present closed-form expressions for the auto-covariance estimate $\covest_x$ for the general model in \eqref{eq:general_model} that predict the particular cases in \eqref{eq:random_amplitude_model} and $\eqref{eq:random_phase_model}$. Theorem~\ref{thm:cross_cov} presents the corresponding expression for the finite-sample variance of $\covest_{xy}$. Furthermore, asymptotic expressions, i.e., valid as $T\to \infty$, for the respective quantities are presented in corresponding corollaries.

In the expressions for the estimator variance in the theorems, the continuous-time counterpart of the Fej\'er kernel \cite{StoicaM05},
\begin{equation}\label{eq:fejer}
\begin{aligned}
	\kernel_T(\freq)&\triangleq\int_{t=-T}^T (1-|t|/T) e^{i2\pi \freq t}dt\\
	&=\frac{2}{\freq^2T(2\pi)^2}(1-\cos(2\pi\freq T)),
\end{aligned}
\end{equation}
appears naturally.
The kernel $\kernel_T$ has several interesting properties that will be used in the derivations of the theorems; in particular, $\kernel_T$ acts as an approximate identity in convolutions. Letting $*$ denote convolution, we summarize these properties in the following proposition.
%
%
\begin{proposition}[Properties of $\kernel_T$]\label{prop:kernel}
 The following properties of $\kernel_T$ hold.
\begin{itemize}
	\item For all $T > 0$, $\kernel_T(\freq) \geq 0$ for all $\freq \in \RR$.
	\item For all $T > 0$, $\lim_{\freq\to0}\kernel_T(\freq) = T$.
	\item For all $\freq \neq 0$, $\lim_{T\to\infty} \kernel_T(\freq) = 0$.
	\item For all $T > 0$,
	$\int_{-\infty}^\infty \kernel_T(\freq)d\freq = 1.$
	\item For any $\Phi \in L_1(\RR)$, $\kernel_T*\Phi \to \Phi$ in $L_1$ as $T\to\infty$.
\end{itemize}
\end{proposition}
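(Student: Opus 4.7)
The plan is to dispatch properties 1–4 by direct manipulation of the closed form in \eqref{eq:fejer}, and then use them as the only inputs for the approximate-identity statement in property 5. For property 1, I would first note that $\kernel_T$ admits the squared-modulus representation
\[ \kernel_T(\freq) = \frac{1}{T}\abs{\int_0^T e^{i2\pi\freq t}\,dt}^2, \]
obtained by expanding $\abs{\int_0^T e^{i2\pi\freq t}\,dt}^2 = \int_0^T\!\!\int_0^T e^{i2\pi\freq(t-s)}\,ds\,dt$ via Fubini and substituting $u=t-s$ to collapse the double integral into the single-integral form in \eqref{eq:fejer}; nonnegativity is then automatic. Property 2 follows either by Taylor-expanding $1-\cos(2\pi\freq T) = 2\pi^2\freq^2 T^2 + O(\freq^4)$ at $\freq = 0$, or, more transparently, from the integral definition at $\freq = 0$: $\int_{-T}^T (1-|t|/T)\,dt = T$. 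For property 3, the elementary bound $1-\cos(2\pi\freq T) \le 2$ applied to \eqref{eq:fejer} yields the uniform pointwise estimate $\kernel_T(\freq) \le 4/((2\pi\freq)^2 T)$ for $\freq \ne 0$, which sends $\kernel_T(\freq)$ to zero as $T \to \infty$.

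For property 4, I would appeal to Fourier inversion. The closed form in \eqref{eq:fejer} shows $\kernel_T \in L^1(\RR)$ via the $1/\freq^2$ tail, while the integral definition identifies $\kernel_T$ as the Fourier transform of the triangular window $h_T(t) = (1-|t|/T)\mathbbm{1}_{[-T,T]}(t)$. Since $h_T$ is continuous at $t=0$ with $h_T(0) = 1$, Fourier inversion at $t=0$ delivers $\int_\RR \kernel_T(\freq)\,d\freq = 1$.

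Property 5 is the standard approximate-identity argument, requiring only items 1, 3, and 4 above. I would use $\int \kernel_T = 1$ to write $\kernel_T*\Phi(\omega)-\Phi(\omega) = \int_\RR \kernel_T(s)(\Phi(\omega-s)-\Phi(\omega))\,ds$ and then apply Minkowski's integral inequality (valid because $\kernel_T \ge 0$) to obtain
\[ \norm{\kernel_T*\Phi - \Phi}_{L^1} \le \int_\RR \norm{\Phi(\cdot-s)-\Phi(\cdot)}_{L^1}\,\kernel_T(s)\,ds. \]
Splitting the outer integral at some $\delta > 0$, on $\abs{s} < \delta$ continuity of translation in $L^1$ bounds $\norm{\Phi(\cdot-s)-\Phi(\cdot)}_{L^1}$ uniformly by any prescribed $\epsilon$, so that this piece is controlled by $\epsilon \int \kernel_T = \epsilon$; on $\abs{s} \ge \delta$ the crude estimate $\norm{\Phi(\cdot-s)-\Phi(\cdot)}_{L^1} \le 2\norm{\Phi}_{L^1}$ combined with the $1/(T\freq^2)$ tail from property 3 gives $\int_{\abs{s}\ge\delta}\kernel_T(s)\,ds \le 2/(T\pi^2\delta) \to 0$. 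No step is individually delicate; the main technical point is simply verifying that the pointwise tail bound from property 3 translates into the mass concentration required outside every fixed neighborhood of the origin.
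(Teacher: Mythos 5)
Your proposal is correct and complete. The paper itself does not supply an argument -- it declares the first four properties ``easily verified'' and cites Hoffman for the fifth -- so your write-up is essentially a self-contained filling-in of what the authors delegate to the reader and to the reference. The individual steps all check out: the squared-modulus representation $\kernel_T(\freq)=\frac{1}{T}\abs{\int_0^T e^{i2\pi\freq t}dt}^2$ is the cleanest route to nonnegativity and is exactly the identity that surfaces implicitly in the paper's proofs of Theorems~\ref{thm:auto_cov_general} and \ref{thm:cross_cov} (where the double time integral is collapsed to the triangular-window form); the tail bound $\kernel_T(\freq)\le 4/((2\pi\freq)^2T)$ correctly gives both property 3 and the mass-concentration estimate $\int_{\abs{s}\ge\delta}\kernel_T(s)\,ds\le 2/(\pi^2 T\delta)$; and the Fourier-inversion argument for unit mass is valid since $\kernel_T\in L^1$ and the triangular window is continuous at the origin. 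Your approximate-identity argument for property 5 is the standard one from the cited reference (split at $\delta$, continuity of translation in $L^1$ on the inner piece, vanishing tail mass on the outer piece), with the quantifiers in the right order. The only thing worth flagging is cosmetic: you correctly identify that the whole of property 5 rests on upgrading the pointwise decay of property 3 to the integrated tail estimate, which is the one step the proposition's bare statement does not hand you for free.
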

\begin{proof}
The first four properties are easily verified. For the last property, see, e.g., \cite[Chapter 2]{Hoffman62_banach}.
\end{proof}
With this, we are ready to state the theorems providing the theoretical estimator variances illustrated in Figures~\ref{fig:example_autocovariance} and \ref{fig:example_crosscovariance}.
%
%
\begin{theorem}\label{thm:auto_cov_general}
Let $x$ be as in \eqref{eq:general_model}. Then, the variance of the auto-covariance estimate is given by
\begin{align*}
&\var{\covest_x(\tau;T)}\\
&=\frac{1}{T}\int_\freq\int_\phi \kernel_T(\freq-\phi) d\mu_x(\freq)  d\mu_x(\phi) +\sum_{k=1}^{N_x}\left(\kurtrat_k-2\right)\xamp_k^4.
\end{align*}
\end{theorem}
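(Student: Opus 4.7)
The plan is to decompose $\covest_x(\tau;T)$ into four integrals arising from $x=x_a+x_s$, treat each separately using Gaussianity of $x_a$, circular symmetry of the random amplitudes $\randamp_k^{(x)}$, and independence, and then recombine everything via the bilinear expansion of $d\mu_x\otimes d\mu_x$. Write $\covest_x(\tau;T)=A+B+C+D$ for the four cross-terms corresponding to the products $x_a\overline{x_a}$, $x_a\overline{x_s}$, $x_s\overline{x_a}$, $x_s\overline{x_s}$. Since $x_a$ and $x_s$ are independent, zero-mean, and circularly symmetric, direct computation shows that all six cross-covariances among $A,B,C,D$ vanish (each either involves a zero first moment of $x_a$ or $x_s$, the vanishing $\expect{x_a(t)x_a(s)}=0$, or a fourth moment that factors exactly into the product of means). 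Hence $\var{\covest_x(\tau;T)}$ splits into the four marginal variances.

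For $\var{A}$, apply Isserlis' formula for circularly symmetric complex Gaussians, in which only the two covariance pairings survive, to obtain $\expect{|A|^2}-|\expect{A}|^2 = T^{-2}\iint_{[0,T]^2}|\cov_{x_a}(t-s)|^2\,dt\,ds$. The substitution $u=t-s$, Fubini, and Fourier-inverting $|\cov_{x_a}(u)|^2=\iint e^{i2\pi(\freq-\phi)u}\Phi_x(\freq)\Phi_x(\phi)d\freq\,d\phi$ converts the $u$-integral into $\kernel_T(\freq-\phi)$ via definition \eqref{eq:fejer}, producing $T^{-1}\iint\kernel_T(\freq-\phi)\Phi_x(\freq)\Phi_x(\phi)d\freq\,d\phi$. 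For $\var{B}$ and $\var{C}$, independence factors the fourth moment into $\cov_{x_a}(t-s)\overline{\cov_{x_s}(t-s)}$ and its conjugate; the same Fourier/Fejér manipulation yields $T^{-1}\iint\kernel_T(\freq-\phi)\Phi_x(\freq)d\freq\,d\mu_{x_s}(\phi)$ and its symmetric counterpart.

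The core calculation is $\var{D}$. Expanding gives $D=\sum_{k,\ell}\randamp_k^{(x)}\overline{\randamp_\ell^{(x)}}e^{i2\pi\freq_\ell^x\tau}I_{k\ell}$ with $I_{k\ell}=T^{-1}\int_0^T e^{i2\pi(\freq_k^x-\freq_\ell^x)t}dt$, turning $\expect{|D|^2}$ into a fourfold sum over $(k,\ell,k',\ell')$. Independence across $k$ and circular symmetry of each $\randamp_k^{(x)}$ (which forces $\expect{(\randamp_k^{(x)})^m\overline{\randamp_k^{(x)}}^n}=0$ unless $m=n$) reduce the sum to index configurations satisfying the multiset equality $\{k,\ell'\}=\{\ell,k'\}$. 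Of the four such patterns, $k=\ell'\neq\ell=k'$ vanishes because it carries the factor $\expect{(\randamp_k^{(x)})^2}=0$. The remaining three contribute (i) $\sum_k\kurtrat_k\xamp_k^4$ from all indices equal; (ii) the off-diagonal part of $|\cov_{x_s}(\tau)|^2$ from $k=\ell\neq k'=\ell'$; and (iii) $\sum_{k\neq\ell}\xamp_k^2\xamp_\ell^2|I_{k\ell}|^2$ from $k=k'\neq\ell=\ell'$, where $|I_{k\ell}|^2=\kernel_T(\freq_k^x-\freq_\ell^x)/T$ by the same Fejér-kernel identity used in the Gaussian step. Subtracting $|\expect{D}|^2=|\cov_{x_s}(\tau)|^2$ eliminates (ii) but leaves $-\sum_k\xamp_k^4$, and reinstating the missing diagonal of (iii) to complete the double sum $T^{-1}\iint\kernel_T(\freq-\phi)d\mu_{x_s}(\freq)d\mu_{x_s}(\phi)$ costs another $-\sum_k\xamp_k^4$, leaving exactly $\sum_k(\kurtrat_k-2)\xamp_k^4$ as the residual.

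Finally, summing the four variances and using the bilinear expansion $d\mu_x\otimes d\mu_x=\Phi_x\otimes\Phi_x+\Phi_x\otimes d\mu_{x_s}+d\mu_{x_s}\otimes\Phi_x+d\mu_{x_s}\otimes d\mu_{x_s}$ recombines the continuous, mixed, and singular Fejér double integrals into the single $T^{-1}\iint\kernel_T(\freq-\phi)d\mu_x(\freq)d\mu_x(\phi)$, which together with the kurtosis term gives the claim. The main obstacle is the bookkeeping in $\var{D}$: one must verify that the $k=\ell'\neq\ell=k'$ pattern is killed by circular symmetry (rather than by a hoped-for cancellation) and carefully track which $\sum_k\xamp_k^4$ terms are added or subtracted so that the final coefficient is $(\kurtrat_k-2)$ rather than $(\kurtrat_k-1)$ or $\kurtrat_k$.
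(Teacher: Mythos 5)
Your proposal is correct and follows essentially the same route as the paper's proof: the split $x=x_a+x_s$, Isserlis for the circularly symmetric Gaussian part, the quadruple-sum pairing argument with circular symmetry isolating the $\sum_k(\kurtrat_k-2)\xamp_k^4$ residual for the singular part, and the Fej\'er-kernel identity converting the double time integral into the spectral double integral. The only difference is organizational: you decompose the estimator into four integrals and verify that their cross-covariances vanish, whereas the paper expands the pointwise fourth moment $\expect{x(t)\overline{x(t-\tau)}\,\overline{x(\sigma)}x(\sigma-\tau)}$ and performs the Fej\'er conversion once at the end -- the surviving terms and the bookkeeping of the $\sum_k\xamp_k^4$ corrections are identical.
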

\begin{proof}
See appendix.
\end{proof}
%
\begin{remark}
It may be noted that, perhaps surprisingly, the estimator variance does not depend on the lag $\tau$. This is due to the same averaging time, $T$, is used, irrespective of the lag. This is the relevant case for array processing; the covariance components constituting the array covariance matrix are all estimated from a common data length. All presented results may, in a straightforward manner, be modified as to account for other averaging times and time windows.
\end{remark}
It may be noted that the variance of the estimator is related to the concentration of mass as described by the spectrum $d\mu_x$ as well as to the kurtosis $\kappa_k$ of the amplitudes. The asymptotic variance is given in the following corollary.
%
\begin{corollary}\label{cor:auto_cov_general}
Let $x$ be as in \eqref{eq:general_model} and assume that $\Phi_x$ is continuous in the points $\freq_k^x$ for $k=1,\ldots, N_x$. Then, as $T\to \infty$,
\begin{align*}
	T\,\var{\covest_{x}(\tau;T)}  - \asympvarautoamp \to 0,
\end{align*}
where
\begin{align*}
	\asympvarautoamp = \int_{\freqband} \Phi_x(\freq)^2d\freq+ 2\sum_{k=1}^{N_x}\xamp_k^2 \Phi_x(\freq_k^x) + T\sum_{k=1}^{N_x}\left( \kurtrat_k - 1 \right)\xamp_k^4.
\end{align*}
\end{corollary}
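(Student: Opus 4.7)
The plan is to reduce everything to a careful analysis of the double integral appearing in Theorem~\ref{thm:auto_cov_general}. Multiplying that identity by $T$, we obtain
$$T\,\var{\covest_x(\tau;T)} = I_T + T\sum_{k=1}^{N_x}(\kurtrat_k-2)\xamp_k^4,$$
where $I_T \triangleq \int\int \kernel_T(\freq-\phi)\,d\mu_x(\freq)d\mu_x(\phi)$. Since the $T\sum_k(\kurtrat_k-2)\xamp_k^4$ term is exactly what is needed to convert the leading $T\sum_k\xamp_k^4$ that will arise from $I_T$ into the $T\sum_k(\kurtrat_k-1)\xamp_k^4$ appearing in $\asympvarautoamp$, the whole corollary reduces to proving
$$I_T - \Bigl(\int_{\freqband}\Phi_x(\freq)^2d\freq + 2\sum_{k=1}^{N_x}\xamp_k^2\Phi_x(\freq_k^x) + T\sum_{k=1}^{N_x}\xamp_k^4\Bigr) \to 0.$$

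To do this, I expand $d\mu_x = \Phi_x d\freq + \sum_k \xamp_k^2\delta_{\freq_k^x}$ in both copies of the measure, splitting $I_T$ into four pieces. For the absolutely continuous/absolutely continuous piece, Fubini and the evenness of $\kernel_T$ identify it with $\int(\kernel_T*\Phi_x)(\phi)\Phi_x(\phi)d\phi$; the approximate-identity property from Proposition~\ref{prop:kernel} then yields convergence to $\int\Phi_x^2$. For each of the two mixed pieces, the relevant quantity is $(\kernel_T*\Phi_x)(\freq_k^x)$, which converges to $\Phi_x(\freq_k^x)$ by the standard fact that an $L^1$ approximate identity converges to the value of the function at any point of continuity; together these yield $2\sum_k\xamp_k^2\Phi_x(\freq_k^x)$. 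Finally, for the singular/singular piece $\sum_{j,k}\xamp_j^2\xamp_k^2\kernel_T(\freq_j^x-\freq_k^x)$, the diagonal $j=k$ contributes $\kernel_T(0)\sum_k\xamp_k^4 = T\sum_k\xamp_k^4$, while the off-diagonal terms vanish as $T\to\infty$ since $\freq_j^x\neq\freq_k^x$ and $\kernel_T(\freq)\to 0$ pointwise away from the origin by Proposition~\ref{prop:kernel}.

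The main obstacle is the AC/AC piece: Proposition~\ref{prop:kernel} only supplies $L^1$ convergence of $\kernel_T*\Phi_x$, whereas one needs convergence of the inner product against $\Phi_x$. Under the natural assumption $\Phi_x\in L^2(\mathcal{I}_B)$ (implicit in the finiteness of $\int\Phi_x^2$ in $\asympvarautoamp$), this is resolved by Cauchy--Schwarz: by Plancherel, $\widehat{\kernel_T*\Phi_x}=\widehat{\kernel_T}\widehat{\Phi_x}$ with $\widehat{\kernel_T}$ uniformly bounded and converging pointwise to $1$, so dominated convergence gives $\kernel_T*\Phi_x\to\Phi_x$ in $L^2$, and the inner product converges accordingly. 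The mixed-term step uses only continuity of $\Phi_x$ at $\freq_k^x$ plus non-negativity and unit mass of $\kernel_T$, a standard approximation-of-identity argument. Summing the four contributions and reintroducing the $T\sum_k(\kurtrat_k-2)\xamp_k^4$ term recovers $\asympvarautoamp$ exactly, completing the proof.
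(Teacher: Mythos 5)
Your proposal is correct and follows essentially the same route as the paper, whose proof of this corollary simply defers to the argument for Corollary~\ref{cor:asymptote}: expand the double integral from Theorem~\ref{thm:auto_cov_general} over the four products of the absolutely continuous and singular parts of $d\mu_x$, use the approximate-identity property of $\kernel_T$ for the AC/AC and mixed terms, use $\kernel_T(0)=T$ and the pointwise decay $\kernel_T(\freq)\to 0$ for $\freq\neq 0$ on the singular/singular term, and absorb the resulting diagonal $T\sum_k\xamp_k^4$ into the $(\kurtrat_k-2)$ term to obtain $(\kurtrat_k-1)$. The one place you go beyond the paper is the AC/AC step, where you correctly observe that $L^1$ convergence of $\kernel_T*\Phi_x$ alone does not control $\int(\kernel_T*\Phi_x)\Phi_x$ and supply the missing $L^2$/Plancherel argument (under $\Phi_x\in L^2$, which is implicitly required for $\int\Phi_x^2$ to be finite); this is a legitimate tightening of a step the paper passes over.
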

\begin{proof}
The result follows directly by applying Theorem~\ref{thm:auto_cov_general} in the same way as Theorem~\ref{thm:cross_cov} is applied in the proof of Corollary~\ref{cor:asymptote}.
\end{proof}
%
Here, it may be noted that the variance of $\covest_x(\cdot;T)$ does not necessarily tend to zero as $T \to \infty$ when the process contains a sinusoidal component. In particular, the limiting variance is
\begin{align*}
	\lim_{T\to \infty}\frac{1}{T}\asympvarautoamp = \sum_{k=1}^{N_x}\left( \kurtrat_k - 1 \right)\xamp_k^4.
\end{align*}
Although no models \eqref{eq:general_model} are ergodic due to the stochastic amplitudes $\randamp_k^{(x)}$, one particular case allows for statistically consistent covariance estimates, as stated in the following corollary.
\begin{corollary}
As $T\to \infty$, the estimator variance $\var{\covest_x(\tau;T)}$ tends to zero if and only if all amplitudes follow the fixed magnitude model \eqref{eq:random_phase_model}.
\end{corollary}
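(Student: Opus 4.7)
The plan is to deduce the statement directly from Corollary~\ref{cor:auto_cov_general} together with a basic moment inequality characterizing the kurtosis value $\kurtrat_k=1$, plus the circular symmetry of the amplitudes.

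First, I would rearrange Corollary~\ref{cor:auto_cov_general}. Dividing the relation $T\,\var{\covest_x(\tau;T)}-\asympvarautoamp\to 0$ by $T$ and using the explicit form of $\asympvarautoamp$, I would observe that
\begin{equation*}
\var{\covest_x(\tau;T)} = \frac{1}{T}\int_{\freqband}\Phi_x(\freq)^2 d\freq+\frac{2}{T}\sum_{k=1}^{N_x}\xamp_k^2\Phi_x(\freq_k^x)+\sum_{k=1}^{N_x}(\kurtrat_k-1)\xamp_k^4+o(1).
\end{equation*}
The first two terms vanish as $T\to\infty$, so the estimator variance tends to zero if and only if the residual term $\sum_{k=1}^{N_x}(\kurtrat_k-1)\xamp_k^4$ is zero.

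Next, I would argue that every term in this sum is non-negative. By the Cauchy--Schwarz inequality (or Jensen applied to $t\mapsto t^2$ with the non-negative random variable $|\randamp_k^{(x)}|^2$), we have $\expect{|\randamp_k^{(x)}|^2}^2\leq \expect{|\randamp_k^{(x)}|^4}$, i.e., $\kurtrat_k\geq 1$, with equality precisely when $|\randamp_k^{(x)}|^2$ is almost surely constant, hence equal to $\xamp_k^2$. Since the amplitudes $\xamp_k$ are strictly positive, the non-negative sum vanishes if and only if $\kurtrat_k=1$ for every $k=1,\ldots,N_x$.

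Finally, I would translate this into a statement about the distribution of the complex amplitudes. The condition $\kurtrat_k=1$ gives $|\randamp_k^{(x)}|=\xamp_k$ a.s., so we may write $\randamp_k^{(x)}=\xamp_k e^{i\varphi_k^{(x)}}$ for some random phase $\varphi_k^{(x)}$ taking values in $(-\pi,\pi]$. Circular symmetry of $\randamp_k^{(x)}$ means that $e^{i\theta}\randamp_k^{(x)}$ has the same distribution as $\randamp_k^{(x)}$ for every $\theta\in\RR$, which in terms of $\varphi_k^{(x)}$ says that its distribution on the circle is invariant under all rotations. The unique such distribution is the uniform distribution $U((-\pi,\pi])$, which is exactly the model \eqref{eq:random_phase_model}. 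The converse is immediate: for model \eqref{eq:random_phase_model} one has $|\randamp_k^{(x)}|=\xamp_k$ deterministically, so $\kurtrat_k=1$ and the limiting variance vanishes. The only mildly delicate step is this last rotation-invariance argument identifying uniformity on the circle; everything else reduces to the Cauchy--Schwarz equality condition and the asymptotic expansion already provided by Corollary~\ref{cor:auto_cov_general}.
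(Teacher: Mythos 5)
Your proof is correct and follows essentially the same route as the paper: reduce, via Corollary~\ref{cor:auto_cov_general}, to the condition that the limiting variance $\sum_{k}(\kurtrat_k-1)\xamp_k^4$ vanishes, and then characterize $\kurtrat_k=1$ as the fixed-magnitude case. The only difference is that the paper simply cites the fact that $\kurtrat_k\geq 1$ with equality precisely for the model \eqref{eq:random_phase_model}, whereas you supply the short self-contained argument (Cauchy--Schwarz equality forcing $|\randamp_k^{(x)}|$ a.s.\ constant, then rotation invariance forcing the phase to be uniform), which is a welcome but inessential elaboration.
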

\begin{proof}
It holds that (see, e.g., \cite{Mathis01}) $\kurtrat_k\geq 1$ with equality if and only if the random variables $\randamp_k^{(x)}$ are distributed according to \eqref{eq:random_phase_model}.
\end{proof}
Thus, as long as the magnitudes and not only the phases are random, the statistical moments cannot be estimated from single realizations of the process.  It may be noted that the asymptotic variance expression of Corollary~\ref{cor:auto_cov_general} for the special case $\kurtrat_k = 1$ coincides with that of \cite[Theorem 2]{Delmas01_47}. However, in the latter, the result holds for mixed-spectrum processes where the component with the density is a (not necessarily Gaussian) moving average process, whereas the result derived herein concerns Gaussian processes with general densities.

The results concerning the estimation of the auto-covariance of a signal can be extended to the case where the cross-covariance is estimated. The following theorem holds.
%
\begin{theorem}\label{thm:cross_cov}
Let $x$ and $y$ be as in \eqref{eq:general_model}. Then, the variance of the cross-covariance estimate is given by 
\begin{align*}
\var{\covest_{xy}(\tau;T)} &= \expop\left(\abs{\covest_{xy}(\tau;T)}^2\right) -\abs{\expop\left(\covest_{xy}(\tau;T)\right)}^2\\
&= \frac{1}{T}\int_\freq\int_\phi \kernel_T(\freq-\phi) d\mu_x(\freq)  d\mu_y(\phi).
\end{align*}
\end{theorem}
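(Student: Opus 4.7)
The plan is to decompose each process into its absolutely continuous and singular parts, as in \eqref{eq:general_model}, and exploit the independence structure of the model to reduce $\var{\covest_{xy}(\tau;T)}$ to a sum of four variances, each of which can be evaluated in closed form. The Fej\'er kernel $\kernel_T$ will emerge in each term through the elementary identity
\[
\int_0^T\!\int_0^T e^{i2\pi\omega(t-s)}\,dt\,ds = T\kernel_T(\omega),
\]
which follows from \eqref{eq:fejer} after the change of variable $u=t-s$ on the double time integral.

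Writing $\covest_{xy}(\tau;T) = A+B+C+D$, where $A, B, C, D$ are the cross-covariance estimates associated with the four pairs $(x_a,y_a)$, $(x_a,y_s)$, $(x_s,y_a)$, $(x_s,y_s)$, respectively, I would first verify that every pairwise covariance among $A, B, C, D$ vanishes, so that $\var{\covest_{xy}(\tau;T)} = \var{A}+\var{B}+\var{C}+\var{D}$. This uses the mutual independence of $x_s$ and $y_s$ and their joint independence from $(x_a, y_a)$, combined with the zero-mean property of all amplitudes: any unpaired factor $\randamp_k^{(x)}$ or $\randamp_\ell^{(y)}$ kills the corresponding moment.

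Next, I would compute each of the four variances. For $\var{A}$, Isserlis applied to the jointly circularly symmetric complex Gaussian pair $(x_a,y_a)$ reduces the four-point moment to a single product (the pseudo-covariance terms vanish by circular symmetry), after which substituting the spectral representations of $\cov_{x_a}$ and $\cov_{y_a}$ and invoking the identity above yields $\tfrac{1}{T}\!\int\!\int \kernel_T(\freq-\phi)\Phi_x(\freq)\Phi_y(\phi)d\freq\,d\phi$. The variances $\var{B}$ and $\var{C}$ are handled analogously, with one absolutely continuous density replaced by the covariance of the relevant singular component, whose spectrum is $d\mu_{y_s}$ or $d\mu_{x_s}$. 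For $\var{D}$, independence across the amplitudes collapses $\expop(\abs{D}^2)$ to its diagonal indices $k=k'$, $\ell=\ell'$, and the Fej\'er identity converts each $\abs{\int_0^T e^{i2\pi(\freq_k^x-\freq_\ell^y)t}dt}^2$ into $T\kernel_T(\freq_k^x-\freq_\ell^y)$. Summing the four contributions and recombining $\Phi_x d\freq + d\mu_{x_s} = d\mu_x$ and $\Phi_y d\freq + d\mu_{y_s} = d\mu_y$ gives the claimed identity.

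The main obstacle I anticipate is the careful bookkeeping for the vanishing of the six cross-covariances between $A, B, C, D$, each of which requires simultaneously invoking the independence of $x_s$ from $y_s$, the independence of the singular parts from the Gaussian parts, and the zero-mean property of the amplitudes. A pleasant structural observation that drops out of the computation is that no kurtosis factor $\kurtrat_k$ appears in $\var{\covest_{xy}}$, since in $\abs{D}^2$ each amplitude enters only in the paired form $\randamp\,\overline{\randamp}$ rather than as $\abs{\randamp}^4$. This is the reason the cross-covariance variance admits a form independent of the amplitude distribution, in contrast with the auto-covariance case of Theorem~\ref{thm:auto_cov_general}.
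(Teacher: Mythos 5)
Your proposal is correct and follows essentially the same route as the paper: reduce $\expop\left(\abs{\covest_{xy}(\tau;T)}^2\right)$ to products of second-order moments and convert the double time integral into the Fej\'er kernel via $\int_0^T\!\int_0^T e^{i2\pi\omega(t-s)}\,dt\,ds = T\kernel_T(\omega)$. The only difference is organizational --- the paper asserts the Wick-type factorization of the fourth moment for the full mixed processes in one step, whereas you justify it by splitting the estimator over the four pairs $(x_a,y_a)$, $(x_a,y_s)$, $(x_s,y_a)$, $(x_s,y_s)$ and checking that the six cross terms vanish, which makes explicit the independence bookkeeping (and the absence of any kurtosis term) that the paper leaves implicit.
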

\begin{proof}
See appendix.
\end{proof}
As can be seen from Theorem~\ref{thm:cross_cov}, $\covest_{xy}(\tau;T)$ is an unbiased estimator of $\cov_{xy}(\tau)$, with a variance that depends on the overlap of the spectra $d\mu_x$ and $d\mu_y$. However, it is not necessarily a consistent estimator, as shown in Corollary~\ref{cor:asymptote}.
%
%
%
\begin{corollary} \label{cor:asymptote}
Let $x$ and $y$ be as in \eqref{eq:general_model} and assume that $\Phi_y$ is continuous in the points $\freq_k^x$ for $k=1,\ldots, N_x$, and that $\Phi_x$ is continuous in the points $\freq_\ell^y$ for $\ell=1,\ldots, N_y$. Then, as $T\to \infty$,
\begin{align*}
	T\,{\rm Var}\left(\covest_{xy}(\tau;T)\right)  - \asympvarcross \to 0,
\end{align*}
where
\begin{align*}
\asympvarcross &=  \int_{\freqband} \Phi_x(\freq)  \Phi_y(\freq)d\freq+ \sum_{k=1}^{N_x}\xamp_k^2 \Phi_y(\freq_k^x)+ \sum_{k=1}^{N_y}\yamp_k^2 \Phi_x(\freq_k^y)\\
&\quad+T\sum_{k=1}^{N_x}\sum_{\ell=1}^{N_y}  \xamp_k^2\yamp_\ell^2 \chi_{\{\freq_k^x=\freq_\ell^y\}}
\end{align*}
and where $\chi$ is the characteristic function.
\end{corollary}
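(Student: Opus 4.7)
The plan is to start from the exact finite-sample variance formula in Theorem~\ref{thm:cross_cov} and expand it by substituting the mixed-spectrum decomposition from \eqref{eq:spectra}. Multiplying by $T$, the double integral $\int\int \kernel_T(\freq-\phi)\,d\mu_x(\freq)\,d\mu_y(\phi)$ splits into four pieces according to the absolutely continuous and singular parts of $d\mu_x$ and $d\mu_y$: a density--density integral, two density--point-mass integrals, and a point-mass--point-mass double sum. The goal will then be to identify the limit of each piece as $T\to\infty$ using the properties of $\kernel_T$ collected in Proposition~\ref{prop:kernel}, and to match them against the four summands in $\asympvarcross$.

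For the density--density term, I would rewrite $\int\int \kernel_T(\freq-\phi)\Phi_x(\freq)\Phi_y(\phi)\,d\freq\,d\phi = \int \Phi_y(\phi)(\kernel_T\!*\Phi_x)(\phi)\,d\phi$ using evenness of $\kernel_T$, and invoke property five of Proposition~\ref{prop:kernel} to conclude convergence to $\int_{\freqband}\Phi_x\Phi_y\,d\freq$. For each density--point-mass term, the Dirac selects $\kernel_T*\Phi_y$ evaluated at $\freq_k^x$ (resp.\ $\kernel_T*\Phi_x$ at $\freq_\ell^y$); the continuity hypothesis at these points, combined with the fact that $\kernel_T$ is a nonnegative approximate identity with unit mass (properties one and four of Proposition~\ref{prop:kernel}) and vanishes off the origin (property three), yields pointwise convergence $(\kernel_T*\Phi_y)(\freq_k^x)\to\Phi_y(\freq_k^x)$ by a standard Fej\'er-type argument. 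This gives the second and third summands of $\asympvarcross$.

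For the point-mass--point-mass term $\sum_{k,\ell}\xamp_k^2\yamp_\ell^2 \kernel_T(\freq_k^x-\freq_\ell^y)$, I would split the double sum by whether $\freq_k^x=\freq_\ell^y$. The diagonal pairs contribute $\kernel_T(0)=T$ by property two of Proposition~\ref{prop:kernel}, producing exactly the $T$-proportional characteristic-function sum in $\asympvarcross$. The off-diagonal pairs vanish since $\kernel_T(\freq_k^x-\freq_\ell^y)\to 0$ by property three, and since the sums are finite this convergence is uniform and does not spoil the limit. Adding up the four contributions recovers $\asympvarcross$ and establishes the stated convergence.

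The main obstacle is the density--density term, since Proposition~\ref{prop:kernel} only guarantees $L_1$ convergence of $\kernel_T*\Phi_x$, which is not automatically strong enough to ensure convergence of the pairing against $\Phi_y\in L_1(\freqband)$. The cleanest fix is to note that the limiting expression itself requires $\Phi_x\Phi_y\in L_1(\freqband)$, which is implicit in the statement; under this regularity (e.g., if one of the densities is essentially bounded, or both lie in $L_2(\freqband)$ on the bounded band), H\"older's or the Cauchy--Schwarz inequality upgrades the $L_1$ convergence of $\kernel_T*\Phi_x-\Phi_x$ to the integrated bound needed. The remaining steps are routine applications of the kernel properties.
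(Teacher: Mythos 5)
Your proposal follows essentially the same route as the paper's proof: split the double integral into the density--density, density--point-mass, and point-mass--point-mass contributions, then apply the approximate-identity and pointwise properties of $\kernel_T$ from Proposition~\ref{prop:kernel} to each piece. Your closing observation about the density--density term is well taken --- the paper's proof passes from $L_1$ convergence of $\kernel_T*\Phi_x$ directly to convergence of the pairing with $\Phi_y$ without comment, so your explicit regularity fix (boundedness of one density or both in $L_2(\freqband)$) tightens a step the paper glosses over rather than diverging from it.
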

\begin{proof}
See the appendix.
\end{proof}
%
%
It may be noted that if the two signals share sinusoidal components, then 
$\asympvarcross$ is not bounded as $T \to \infty$. Also in general, the asymptotic variance, $\frac{1}{T} \asympvarcross$ tends to
\begin{align*}
	\lim_{T\to\infty} \frac{1}{T} \asympvarcross = \sum_{k=1}^{N_x}\sum_{\ell=1}^{N_y}  \xamp_k^2\yamp_\ell^2 \chi_{\{\freq_k^x=\freq_\ell^y\}},
\end{align*}
which is strictly positive if any sinusoidal frequencies are common. Interestingly, the issue of ergodicity is only apparent if point masses are shared; if all sinusoidal frequencies are distinct, the estimator variance tends to zero.
%
%
%

The results of Theorems~\ref{thm:auto_cov_general} and \ref{thm:cross_cov}, have implications for inference for array processing applications. In fact, if mixed-spectrum processes are considered and the signals are generated according to \eqref{eq:random_phase_model}, then the array covariance function can be consistently estimated as long as the individual processes do not share any sinusoidal components. In contrast, if the model \eqref{eq:random_amplitude_model}, or indeed any model other than \eqref{eq:random_phase_model}, is used, the array covariance cannot be estimated as the estimates of the auto-covariance functions do not converge to their expectations. 
As we will see next, the results of the presented theorems have implications for how to approximate processes with absolutely continuous spectra. Although the presented results hold for the general model \eqref{eq:general_model}, we give particular attention to the models in \eqref{eq:random_amplitude_model} and \eqref{eq:random_phase_model} as these two models possess special properties that will become clear in the following exposition.
\section{Approximations of continuous spectra}
Consider the problem of generating realizations from a Gaussian process $x$ with a continuous spectral density $\Phi\in C(\freqband)$ as to, e.g., simulate broadband array signals. To this end, one may\footnote{Alternatively, one could, e.g, consider sampling from an ARMA process with the correct spectral shape in the band $\freqband$, together with appropriate bandpass filtering.} approximate the target signal using processes with completely singular spectra, i.e., sinusoidal models. As formalized in the following theorem, any member of the model family \eqref{eq:general_model} allows for approximating a process with a spectral density $\Phi$.
%
%
\begin{theorem} \label{thm:conv_in_dist}
Let $x$ be a band-limited Gaussian WSS process with a continuous spectrum $\Phi\in C(\freqband)$ 
with support in $\freqband = [\freq_c - B/2,\freq_c+B/2]$, where $\freq_c$ is the center frequency and $B$ is the bandwidth. Define the sequence of approximating processes
\begin{align} \label{eq:approxseq}
	x^{(n)}(t) = \sum_{k=1}^n \sqrt{ \frac{B}{n}\Phi\left(\freq_k^{(n)}\right)} e^{i2\pi\freq^{(n)}_k t}\randamp^{(n)}_k,
\end{align}
where the frequency points $\freq_k^{(n)} = \freq_c + B\frac{k-1-n/2}{n}$ for $k=1,\ldots,n$, define a uniform grid on  $\mathcal{I}_B$.
Here, $\randamp_k^{(n)}$ are independent identically distributed circularly symmetric random variables such that $\expect{\abs{\randamp_k^{(n)}}^2} = 1$, and with finite kurtosis parameter $\kappa$.
Then, the sequence of processes $\left\{ x^{(n)} \right\}_n$ converges  to $x$ in distribution when $n\to \infty$.
\end{theorem}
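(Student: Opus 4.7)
The plan is to prove convergence of the finite-dimensional distributions, since the Gaussian target $x$ is characterized entirely by its mean and (pseudo-)covariance. By the Cram\'er-Wold device applied to the stacked real and imaginary parts, it will suffice to show that for arbitrary time points $t_1,\ldots,t_m$ and complex coefficients $c_1,\ldots,c_m$, the linear combination $W_n \triangleq \sum_{j=1}^m c_j\, x^{(n)}(t_j)$ converges in distribution to a circularly symmetric complex Gaussian with the appropriate variance.

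The first step is to verify that the second-order structure matches in the limit. By independence and $\expect{|\randamp_k^{(n)}|^2}=1$, one obtains $\expect{x^{(n)}(t)\overline{x^{(n)}(s)}} = \frac{B}{n}\sum_{k=1}^n \Phi(\freq_k^{(n)}) e^{i2\pi\freq_k^{(n)}(t-s)}$, a Riemann sum that, by continuity of $\Phi$ on the compact band $\freqband$, converges to $\cov_x(t-s)$. The pseudo-covariance $\expect{x^{(n)}(t)x^{(n)}(s)}$ vanishes for every $n$, since circular symmetry of $\randamp_k^{(n)}$ forces $\expect{(\randamp_k^{(n)})^2}=0$, and this matches the circularly symmetric structure of $x$.

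The second step is to recast $W_n = \sum_{k=1}^n a_k^{(n)} \randamp_k^{(n)}$ with $a_k^{(n)} = \sqrt{\tfrac{B}{n}\Phi(\freq_k^{(n)})}\sum_j c_j e^{i2\pi\freq_k^{(n)} t_j}$, exhibiting $W_n$ as a sum of $n$ independent zero-mean circularly symmetric complex summands. Since $|a_k^{(n)}|^2 \leq \tfrac{B}{n}\|\Phi\|_\infty \big(\sum_j|c_j|\big)^2 = O(1/n)$ uniformly in $k$, the summands are individually negligible, and a Lyapunov condition with exponent $4$ follows from $\sum_k \expect{|a_k^{(n)}\randamp_k^{(n)}|^4} = \kappa\sum_k |a_k^{(n)}|^4 \leq \kappa \max_k |a_k^{(n)}|^2 \cdot \sum_k |a_k^{(n)}|^2 = O(1/n)$, which is precisely where the finite kurtosis hypothesis enters. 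Applied jointly to the real and imaginary parts of $W_n$, the multivariate Lyapunov CLT then yields a bivariate Gaussian limit, and combining with the second-order computation above gives convergence of $W_n$ to the required circularly symmetric complex Gaussian.

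The main obstacle I expect is the careful bookkeeping for the complex-valued, circularly symmetric case: one must ensure that the limit produced by the real/imaginary CLT actually corresponds to a circularly symmetric complex Gaussian whose variance matches $\cov_x$, rather than to some general bivariate Gaussian. This is settled by tracking both $\expect{W_n\overline{W_n}}$ and the pseudo-variance $\expect{W_n^2}$ separately: the former is a finite linear combination of Riemann sums as above and converges to the target, while the latter is identically zero for every $n$ by circular symmetry of the amplitudes, so the limiting law has vanishing pseudo-variance as well. Beyond this, everything is routine; the Riemann-sum convergence uses only continuity of $\Phi$ on the compact band $\freqband$, and no moment assumption beyond the stated finite kurtosis is required.
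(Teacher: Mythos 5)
Your proposal is correct and follows the same overall strategy as the paper: convergence of the finite-dimensional distributions, obtained by combining Riemann-sum convergence of the covariance (which the paper isolates as Lemma~\ref{lemma:covariance_convergence}) with a Lyapunov-type central limit theorem for the triangular array of independent summands. The technical execution differs in one respect worth noting. The paper works with the vector sums $S_n=\sum_k X_k^{(n)}$ directly, normalizes by $\expop(S_nS_n^H)^{-1/2}$, and verifies a third-absolute-moment Lyapunov condition; this requires $C_n$ to be invertible (hence the restriction $N<n$) and the estimate $\fouriervec(\freq_k^{(n)})C_n^{-1}\fouriervec(\freq_k^{(n)})\le c/n$, obtained via the somewhat informal step $C_{2n}\approx 2C_n$. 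Your Cram\'er--Wold reduction to scalar combinations $W_n=\sum_k a_k^{(n)}\randamp_k^{(n)}$ avoids the matrix inverse altogether and replaces the third-moment condition by a fourth-moment one, which feeds the finite-kurtosis hypothesis in directly; the price is that you must dispose separately of the degenerate case in which $\sum_k|a_k^{(n)}|^2$ tends to zero, where the Lyapunov ratio is not controlled but Chebyshev's inequality gives convergence to the point mass at the origin, i.e., the correct degenerate Gaussian limit --- a one-line fix you should state explicitly. Your tracking of the identically vanishing pseudo-variance to certify that the bivariate Gaussian limit is in fact circularly symmetric makes explicit a point the paper leaves implicit, and is a welcome addition.
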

\begin{proof}
	See appendix.
\end{proof}
%

\begin{remark}
The result of Theorem~\ref{thm:conv_in_dist} may be generalized in a straightforward manner to spectral densities $\Phi$ that are piecewise continuous with a finite number of discontinuities. 
\end{remark}
Thus, constructing approximations from \eqref{eq:general_model} yields processes that converge in distribution to any given process $x$ when the number of point masses, $n$, tends to infinity. To see that this is consistent with Theorem~\ref{thm:auto_cov_general}, i.e., that the covariance estimator for such approximation should in the limit behave as for the process $x$, it may be noted that, as a consequence of Lemma~\ref{lemma:covariance_convergence} (see the appendix),
\begin{align*}
	d\mu^{(n)}(\freq) = \sum_{k=1}^n  \frac{B}{n}\Phi\left(\freq_k^{(n)}\right) \delta\left(\freq-\freq_k^{(n)}\right)\weakstarconv \Phi(\freq),
\end{align*}
where $\weakstarconv$ denotes weak$^*$ convergence. Then, as $f_T$ is continuous for any finite $T$,
\begin{align*}
	&\int_\freq\int_\phi f_T(\freq-\phi) d\mu^{(n)}(\freq)  d\mu^{(n)}(\phi) \\
	&\qquad\to \int_\freq\int_\phi f_T(\freq-\phi) \Phi(\freq) \Phi(\phi)d\freq d\phi,
\end{align*}
as $n \to \infty$. Furthermore, the term related to the sinusoidal amplitudes in Theorem~\ref{thm:auto_cov_general} is given by
\begin{align*}
	A(n) &\triangleq\sum_{k=1}^n \left(\frac{B}{n}\Phi\left(\freq_k^{(n)}\right)\right)^2 = \frac{1}{n^2}\sum_{k=1}^n \left(B\Phi\left(\freq_k^{(n)}\right)\right)^2,\\
&\le \frac{B^2}{n}\max_{\freq\in \freqband}\Phi\left(\freq\right)^2,
\end{align*}
and since $\Phi$ is bounded,\footnote{Note that $\Phi$ is continuous on a compact interval.} we have that $A(n) \to 0$ as $n \to \infty$. Thus, Theorem~\ref{thm:auto_cov_general} predicts the correct limiting behavior of approximations constructed according to Theorem~\ref{thm:conv_in_dist}. It may however be noted that this is only valid if $T$ is fixed. In order to obtain a description of the behavior of the covariance estimators in the asymptotic regime, i.e., when both $n$ and $T$ are large, we consider the following theorem.
\begin{theorem}\label{thm:approximation_variance}
Let $x^{(n)}$ be processes with spectra $d\mu^{(n)}$, approximating the process $x$ with a continuous spectrum $\Phi$, as in Theorem~\ref{thm:conv_in_dist}, and let $\covest_{x}^{(n)}(\tau;T)$ be the corresponding covariance estimate as in \eqref{eq:autocov_est_a}. Let $n\to \infty, T\to \infty$ with $T\frac{B}{n}\to \asympratio$. Then, 
\begin{align*}
T\,\var{\covest_{x}^{(n)}(\tau;T)} \to \left( (\kurtrat-1)\asympratio+ \rho(\asympratio)\right) \int_{\freqband}\Phi(\freq)^2d\freq,
\end{align*}
where
\begin{align*}
\rho(\asympratio)=\frac{ \breve{\asympratio}(1- \breve{\asympratio})}{\asympratio}
\end{align*}
and $\breve{\asympratio}$ is the decimal part\footnote{
That is, $\breve{\asympratio} = \asympratio -  \lfloor\asympratio\rfloor$ where $\lfloor\asympratio\rfloor$ denotes the integer part of $\asympratio$.} 
of $\asympratio$ and where $\kurtrat$ is the common kurtosis parameter. 
\end{theorem}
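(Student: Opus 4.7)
The plan is to apply Theorem~\ref{thm:auto_cov_general} to the approximating process $x^{(n)}$, whose spectrum is purely singular with point masses $\xamp_k^2 = \Delta\,\Phi(\freq_k^{(n)})$ for $\Delta = B/n$, and then analyze the resulting terms separately in the joint limit. Multiplying the formula in Theorem~\ref{thm:auto_cov_general} by $T$ and isolating the diagonal $k=l$ of the double integral (which evaluates to $\kernel_T(0)\sum_k\xamp_k^4 = T\sum_k\xamp_k^4$ and combines with the kurtosis term), I obtain
\begin{align*}
T\,\var{\covest_x^{(n)}(\tau;T)} &= \sum_{k\ne l}\xamp_k^2\xamp_l^2\,\kernel_T\bigl((k-l)\Delta\bigr)\\
&\quad + T(\kurtrat-1)\sum_{k=1}^n\xamp_k^4.
\end{align*}

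The second term is straightforward: writing $T(\kurtrat-1)\sum_k\xamp_k^4 = (\kurtrat-1)(T\Delta)\bigl(\Delta\sum_k\Phi(\freq_k^{(n)})^2\bigr)$, continuity of $\Phi$ on the compact interval $\freqband$ makes the bracketed Riemann sum converge to $\int_{\freqband}\Phi(\freq)^2 d\freq$, and $T\Delta\to\asympratio$ yields the contribution $(\kurtrat-1)\asympratio\int_{\freqband}\Phi(\freq)^2 d\freq$.

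For the off-diagonal term, the key step is to reindex by $m=k-l$ and exploit the identity $\Delta^2\kernel_T(m\Delta) = \Delta\cdot\frac{1-\cos(2\pi m T\Delta)}{2\pi^2 m^2 T\Delta}$, obtained directly from \eqref{eq:fejer}. This rewrites the off-diagonal sum as
$$\sum_{m\ne 0}\frac{1-\cos(2\pi m T\Delta)}{2\pi^2 m^2 T\Delta}\left(\Delta\sum_k\Phi(\freq_k^{(n)})\Phi(\freq_{k-m}^{(n)})\right),$$
where $k$ ranges over indices keeping both $k$ and $k-m$ in $\{1,\ldots,n\}$. For each fixed $m$, the inner Riemann sum tends to $\int_{\freqband}\Phi^2 d\freq$ as $n\to\infty$ (uniform continuity of $\Phi$ absorbs the shift $m\Delta\to 0$, and the $|m|$ missing endpoint terms contribute $O(\Delta)$), while the outer factor converges to $(1-\cos(2\pi m\asympratio))/(2\pi^2 m^2\asympratio)$. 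The uniform bound $|\Delta\sum_k\Phi(\freq_k^{(n)})\Phi(\freq_{k-m}^{(n)})|\le B\max_{\freqband}|\Phi|^2$ together with $|(1-\cos(2\pi m s))/(2\pi^2 m^2 s)|\le 1/(\pi^2 m^2 s)$, valid with $s=T\Delta$ eventually bounded below by $\asympratio/2$, provides a summable $O(1/m^2)$ majorant. Dominated convergence then delivers $\int_{\freqband}\Phi^2 d\freq\cdot\sum_{m\ne 0}(1-\cos(2\pi m\asympratio))/(2\pi^2 m^2\asympratio)$, and by $2\pi$-periodicity of cosine together with the classical Fourier-series identity $\sum_{m=1}^\infty(1-\cos(2\pi m x))/m^2 = \pi^2 x(1-x)$ for $x\in[0,1]$ (applied at $x=\breve{\asympratio}$), this sum equals $\breve{\asympratio}(1-\breve{\asympratio})/\asympratio = \rho(\asympratio)$.

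The main obstacle is the off-diagonal sum: the individual Fejér-kernel values $\kernel_T(m\Delta)$ diverge with $T$, so one must first bundle them with the $\Delta^2$ prefactor to expose both the pointwise limit $\bigl(1-\cos(2\pi m\asympratio)\bigr)/(2\pi^2 m^2\asympratio)$ and a summable $1/m^2$-envelope uniform in $n$ and $m$. Once this rewriting and the associated uniform Riemann-sum bound are in place, the dominated-convergence step and the identification with $\rho(\asympratio)$ via a classical trigonometric series are essentially bookkeeping; the characteristic fractional-part behavior in $\rho(\asympratio)$ then emerges directly from the $2\pi$-periodicity of cosine.
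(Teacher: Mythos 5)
Your proof is correct and follows essentially the same route as the paper's: apply Theorem~\ref{thm:auto_cov_general} to the purely singular approximating spectrum, reduce the resulting double sum over point masses to lattice sums of the Fej\'er kernel, and evaluate $\sum_{m\neq 0}\bigl(1-\cos(2\pi m\gamma)\bigr)/(2\pi^2 m^2\gamma)=\rho(\gamma)$ via the Fourier series of $x(1-x)$ --- the paper's Lemma~\ref{lemma:summed_kernel} performs the identical summation through the equivalent polylogarithm/Bernoulli-polynomial identity. The only substantive difference is organizational: the paper absorbs the diagonal into the kernel mass $\gamma+\rho(\gamma)$ and packages the $m$-sum as an approximate-identity property of the comb kernel $f_T^{(n)}$ before taking an outer Riemann-sum limit, whereas you pull the diagonal out (merging it with the kurtosis term into the $(\kappa-1)\gamma$ contribution), reindex the off-diagonal sum by $m=k-l$, and justify the interchange of limits by an explicit $1/m^2$ dominating series --- an equally valid and arguably more explicit treatment of the same step.
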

%
\begin{figure}[t]
        \centering
            \includegraphics[width=.46\textwidth]{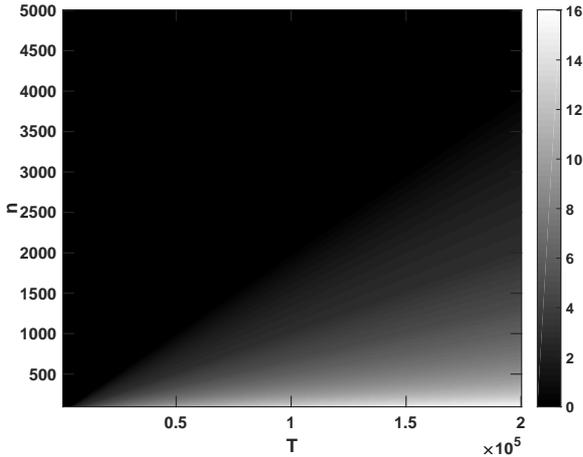}
           \caption{The asymptotic variance factor $\asympratio + \rho(\asympratio)$, where $\asympratio = T\frac{B}{n}$, for the Gaussian amplitude approximation in Theorem~\ref{thm:conv_in_dist}, presented as $10\log_{10} \left(\asympratio + \rho(\asympratio)\right)$, for large $T$ and $n$, as given by Theorem~\ref{thm:approximation_variance}. Here, the bandwidth is $B = 2\times10^{-2}$.} 
            \label{fig:Gaussian_amp_factor}
\vspace{-2mm}\end{figure}
%
%
\begin{remark}
The proof of Theorem~\ref{thm:approximation_variance} is based on the fact that 
\begin{align*}
	\sum_{m=-\infty}^\infty \kernel_\asympratio(m) = \asympratio +\rho(\asympratio),
\end{align*}
where $\kernel_\asympratio$ is the Fej\'er kernel defined in \eqref{eq:fejer}. Therefore, the variance expression connects directly to the sampling of the frequency band used in the approximation.
\end{remark}
\begin{remark}
It may be noted that for $\asympratio \in (0,1]$, $\rho(\asympratio) = 1-\asympratio$, and thus $\asympratio + \rho(\asympratio) \equiv 1$. Thus, for $n\geq BT$, approximations constructed from components with $\kurtrat = 2$, that is, corresponding to Gaussian variables, behave as the target process, i.e., the asymptotic estimator variance is $\frac{1}{T}\int_{\freqband}\Phi(\theta)^2d\theta$.
Furthermore, it is readily verified that $\asympratio + \rho(\asympratio)$ is continuous, monotone increasing, and that $\lim_{\asympratio \to \infty} \rho(\asympratio) = 0$, implying that the variance of such approximations is strictly greater than that of the target process for all $\asympratio>1$. It also directly follows that for approximations with $\kurtrat > 2$, the variance of the approximation is always strictly greater than that of the target process.
Conversely, for $\kurtrat \in [1,2)$, the variance of the approximation is strictly smaller than that of the target for $\asympratio \in (0,1]$. Thus, for this class of approximations to mimic the target process, it is required that $n\gg BT$, as \mbox{$\lim_{\asympratio \to 0} \rho(\asympratio) = 1$}.
For the special case of the fixed magnitude model with $\kurtrat = 1$, the variance is exactly zero for integer $\gamma$ as $\rho(\asympratio) = 0$ for $\asympratio \in \RN$. This is also the only model for which the variance tends to zero as $\asympratio \to \infty$.
\end{remark}
The parameter $\asympratio$ is the product of the measurement duration $T$ and the frequency resolution $B/n$ of the discretization. According to Theorem~\ref{thm:approximation_variance}, the finer the frequency resolution is in relation to the measurement time, the more the singular approximations behave as a process with a spectral density, whereas the properties pertaining from the singular spectra become more apparent when $T$ is large in relation to $B/n$.
Thus, even though all models as $\asympratio\to 0$ have the same statistical behavior as the target process $x$, they can for $\asympratio > 0$, i.e., the number of components only moderately large compared to the measurement time $T$, behave dramatically different. Specifically, if using approximations with $\kurtrat < 2$, and, in particular, the fixed magnitude model, with finite $n$ as to, e.g., generate test data for array processing, one runs the risk of obtaining too well-behaved results as compared to if an actual process $x$ with a spectral density would be used. As an illustration of this, the impact of model choice for DoA estimation with broadband sources will be provided in the numerical section.
%
\begin{figure}[t]
        \centering
            \includegraphics[width=.46\textwidth]{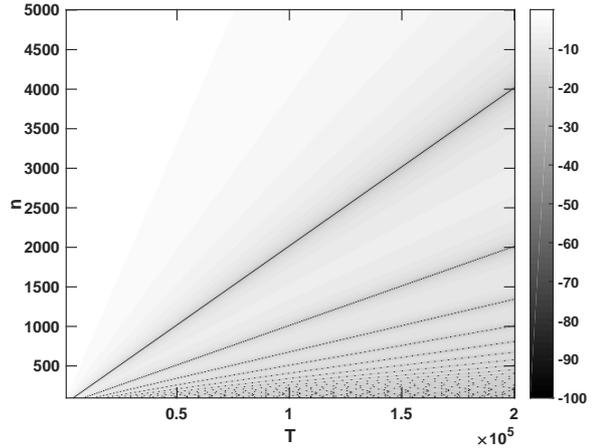}
           \caption{The asymptotic variance factor $\rho(\asympratio)$, where $\asympratio = T\frac{B}{n}$, for the fixed magnitude approximation in Theorem~\ref{thm:conv_in_dist}, presented as $10\log_{10} \left(\rho(\asympratio)\right)$, for large $T$ and $n$, as given by Theorem~\ref{thm:approximation_variance}. Here, the bandwidth is $B = 2\times10^{-2}$.} 
            \label{fig:fixed_amp_factor}
\vspace{-2mm}\end{figure}
%
%
%
\begin{remark}
It may be noted that the result of Theorem~\ref{thm:approximation_variance} may be generalized in a straightforward manner to components with different kurtosis parameter $\kurtrat$. Specifically, as this parameter then is a function of the frequency of that component, the corresponding variance expression becomes
\begin{align*}
	\int_{\freqband}\left( (\kurtrat(\freq)-1)\asympratio+ \rho(\asympratio)\right)\Phi(\freq)^2d\freq.
\end{align*}
Indeed, also Theorem~\ref{thm:conv_in_dist} holds in this case, as only independence of the components, and not equality of the distributions, is needed.
\end{remark}
\section{Numerical illustrations}
In this section, we illustrate the results of the presented theorems by numerical examples. To illustrate the significance of the derived results, we will throughout the examples consider two particular contrasting models, namely the Gaussian amplitude model in \eqref{eq:random_amplitude_model} and the fixed magnitude model in \eqref{eq:random_phase_model}.
\subsection{Asymptotics for singular approximations}
To demonstrate the results of Theorem~\ref{thm:approximation_variance}, Figures~\ref{fig:Gaussian_amp_factor} and \ref{fig:fixed_amp_factor} present the scaling factors $\asympratio + \rho(\asympratio)$ and $\rho(\asympratio)$, corresponding to the Gaussian amplitude and fixed magnitude approximations, respectively, for varying values of $n$ and $T$. Here, the bandwidth is fixed to $B = 2\times 10^{-2}$. As can be seen from Figure~\ref{fig:Gaussian_amp_factor}, the scaling factor for the Gaussian amplitude model is bounded from below by $1$, which is attained for $\asympratio \leq 1$, corresponding to values of $n$ that are large relative to $T$. In contrast, as can be seen in Figure~\ref{fig:fixed_amp_factor}, the scaling factor only asymptotically approaches $1$ from below as $\asympratio \to 0$, i.e., as $n$ grows in relation to $T$. Furthermore, it may be noted that the scaling factor is exactly zero for finite $n$ and $T$ corresponding to integer values of $\asympratio$.
\subsection{Estimator variance for singular approximations}
To illustrate the behavior of the singular approximations of processes with spectral densities, consider the spectrum
\begin{align}\label{eq:spectrum_white_in_band}
	\Phi(\freq) = \begin{cases}
		\frac{1}{B} & \freq \in \freqband\\
		0 & \freq \notin \freqband,
	\end{cases}
\end{align}
where the center frequency and bandwidth of $\freqband$ are $\omega_c = 1$ and $B = 10^{-2}$, respectively. We then approximate $\Phi$ according to Theorem~\ref{thm:conv_in_dist} using both the fixed magnitude model and the model with Gaussian amplitudes. In both cases we, consider approximations with $n = 100$ and $n = 1000$ components.
%
%
\begin{figure}[t]
        \centering
            \includegraphics[width=.465\textwidth]{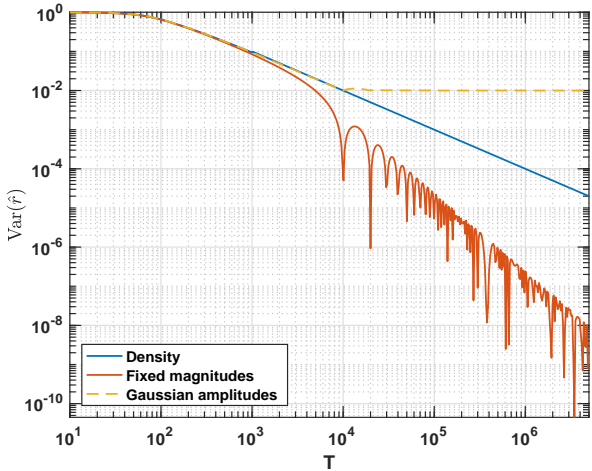}
           \caption{Variance for the covariance estimator for a covariance function corresponding to an absolutely continuous process, as well as for singular approximations according to Theorem~\ref{thm:conv_in_dist} for $n = 100$.} 
            \label{fig:discrete_approximation_n100}
\vspace{-2mm}\end{figure}
%
%
Letting $r_a$ and $r_s^{(n)}$ denote the covariance functions corresponding to $\Phi$ and a singular approximation with $n$ components, respectively, where it may be noted that the covariance function for the two singular approximations are identical, let $\errorfunc: \RR\times\RN \to \RR_+$ be defined as
\begin{align*}
	\errorfunc(\tau,n) \triangleq \sqrt{\frac{\int_{0}^\tau \abs{\cov_a(t)-\cov_s^{(n)}(t)}^2dt}{\int_0^\tau \abs{\cov_a(t)}^2dt}},
\end{align*}
i.e., the relative $L_2$ error when considering the covariance up to lag $\tau$. In this case, considering a maximum lag of $\tau = 100$, we have $\errorfunc(100,100) = 1\times 10^{-6}$ and $\errorfunc(100,1000) = 2\times 10^{-7}$ for the approximations with $n = 10$ and $n = 100$ components, respectively. With this, Figures~\ref{fig:discrete_approximation_n100} and \ref{fig:discrete_approximation_n1000} display the variance of the estimators $\covest_a$ and $\covest_s^{(n)}$ as a function of the measurement duration $T$, for $n = 100$ and $n = 1000$, respectively. The estimator variances are computed according to Theorem~\ref{thm:auto_cov_general}. As can be seen, the variance corresponding the fixed magnitude model is consistently lower than that of the process with a density. Furthermore, even though the Gaussian amplitude model mimics the target process perfectly for $T\leq \frac{n}{B}$, the variance does not tend to zero as $T \to\infty$. In fact, the variance stabilizes for $T \geq 10^{4}$ and $T \geq 10^{5}$ for $n = 100$ and $n = 1000$, respectively, corresponding to $\asympratio \leq 1$, as predicted by Theorem~\ref{thm:approximation_variance}.
%
%
\begin{figure}[t]
        \centering
            \includegraphics[width=.46\textwidth]{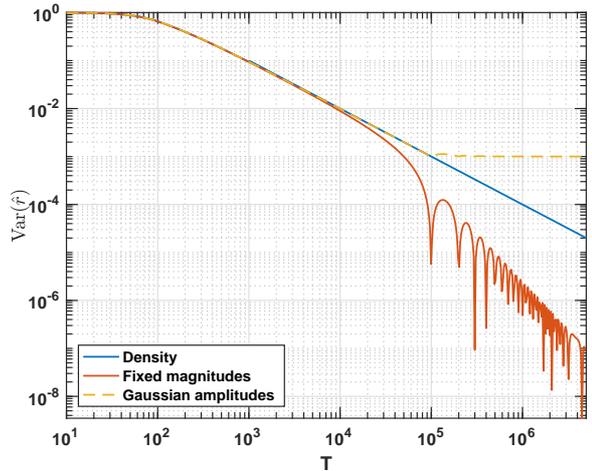}
           \caption{Variance for the covariance estimator for a covariance function corresponding to an absolutely continuous process, as well as for singular approximations according to Theorem~\ref{thm:conv_in_dist} for $n = 1000$.} 
            \label{fig:discrete_approximation_n1000}
\vspace{-2mm}\end{figure}
%
%
%
%
\subsection{Implications for array processing: DoA estimation}
As noted, the singular approximations \eqref{eq:approxseq} in Theorem~\ref{thm:conv_in_dist} differ considerably in terms of their behavior in covariance estimation depending on the kurtosis parameter $\kurtrat$ of the distribution of the stochastic amplitudes $\randamp_k^{(n)}$.
To illustrate the implication of this for array processing, we consider a simple DoA estimation example where two sources, both with spectra as in \eqref{eq:spectrum_white_in_band}, i.e., spectral densities, with bandwidth $B = 10^{-3}$ and center frequency $\freq_c = 0.25$, impinge from angles $-5$ degrees and $10$ degrees, respectively, on a uniform linear array consisting of $10$ sensors with inter-sensor spacing just below half of the highest frequency in the support of \eqref{eq:spectrum_white_in_band}.
As before, we as contrasting examples consider the fixed magnitude model and Gaussian amplitude model, corresponding to $\kurtrat = 1$ and $\kurtrat = 2$, respectively.
We add a spatially and temporally white Gaussian sensor noise to the sensor signals, yielding a signal-to-noise-ratio (SNR) of 10 dB. The array covariance matrix is estimated as the sample covariance matrix, averaging $T = 5\times10^{5}$ consecutive array snapshots. It may here be noted that the snapshots are not independent as consecutive samples are considered.
The spatial spectrum is estimated by integrating the narrowband Capon\footnote{As the Capon spectral estimator is non-linear in the array covariance matrix estimates, the results from Theorem~\ref{thm:approximation_variance} can only be expected to hold qualitatively.} spatial spectrum \cite{Capon69} over the frequency band $\freqband$.
%
%
%
\begin{figure}[t]
        \centering
            \includegraphics[width=.47\textwidth]{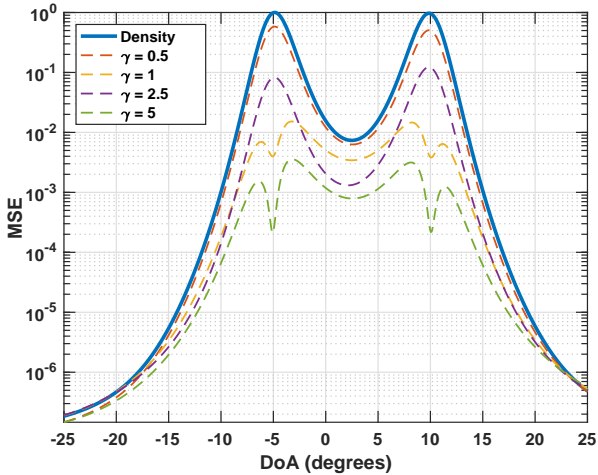}
           \caption{The MSE of the Capon spatial spectrum, integrated over $\freqband$, for two signals with spectral densities, both of bandwidth $B = 10^{-3}$, impinging from $-5$ degrees and $10$ degrees, when approximated according to \eqref{eq:approxseq} using the fixed magnitude model for a measurement duration of $T = 5\times 10^{5 }$. The number $n$ of discretization components vary with the parameter $\asympratio$ from Theorem~\ref{thm:approximation_variance}.} 
            \label{fig:capon_fixed_amp_two_sources}
\vspace{-2mm}\end{figure}
%
%
This is performed for singular approximations of the signals with spectral densities, constructed according to \eqref{eq:approxseq} in Theorem~\ref{thm:conv_in_dist}. For these approximations, we consider varying the parameter $\asympratio$, and thereby $n$ as $T$ and $B$ are fixed, between $\asympratio = 0.5$, corresponding to $n = 1000$, and $\asympratio = 5$, corresponding to $n = 100$. As to avoid the problem of non-vanishing variance observed in Corollary~\ref{cor:asymptote}, the singular components of the second source are shifted in frequency by $B/2n$ as to avoid any overlap. The procedure is repeated in 100 Monte Carlo simulations. The per-angle mean squared error (MSE) for the estimated spatial spectra\footnote{The reference is the corresponding Capon spectrum computed using the exact array covariance matrix. It may be noted that the approximations incur a bias due to the discretization. However, for the considered values of $\asympratio$, the squared bias is two orders of magnitude smaller than the variance corresponding to the process with a density.} are presented in Figures~\ref{fig:capon_fixed_amp_two_sources} and \ref{fig:capon_Gauss_amp_two_sources} for the fixed magnitude and Gaussian amplitude approximations, respectively. As reference, the corresponding MSE of the estimated spatial spectrum for the target process with spectral density, generated by bandpass filtering white noise using a Butterworth filter with passband $\freqband$, is also presented. It may be noted that all values are normalized by the largest per-angle MSE corresponding to the filtered process. As can be seen in Figure~\ref{fig:capon_fixed_amp_two_sources}, the MSE of the spatial spectrum corresponding to the fixed magnitude approximation is lower than that of the filtered process for all considered values of $\asympratio$. One may here recall from Theorem~\ref{thm:approximation_variance} that it is required that $\asympratio \to 0$ for the variance of the covariance estimate to converge to that of the process with spectral density. Furthermore, a drop in the MSE may be observed for integer values of $\asympratio$. For these values, the autocovariances for the two sources are perfectly estimated (c.f. Theorem~\ref{thm:approximation_variance}), and the variability stems from the sensor noise and the non-zero variance of the estimates of the sources' cross-covariance. It may here be noted that the MSE does not strictly decrease with increasing $\asympratio$ as $\rho$ is not monotone. In contrast, the MSE for the Gaussian amplitude approximation coincides with that of the filtered process for $\asympratio \leq 1$, whereas being higher for $\asympratio>1$, in accordance with Theorem~\ref{thm:approximation_variance}.
%
%
\begin{figure}[t]
        \centering
            \includegraphics[width=.47\textwidth]{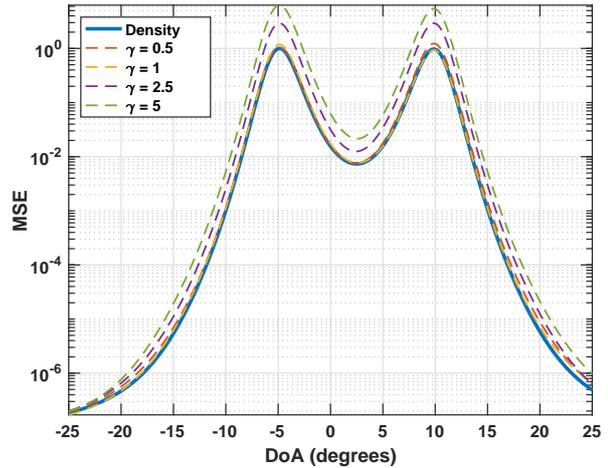}
           \caption{The MSE of the Capon spatial spectrum, integrated over $\freqband$, for two signals with spectral densities, both of bandwidth $B = 10^{-3}$, impinging from $-5$ degrees and $10$ degrees, when approximated according to \eqref{eq:approxseq} using the Gaussian amplitude model for a measurement duration of $T = 5\times 10^{5 }$. The number $n$ of discretization components vary with the parameter $\asympratio$ from Theorem~\ref{thm:approximation_variance}.} 
            \label{fig:capon_Gauss_amp_two_sources}
\vspace{-2mm}\end{figure}
%
%
\section{Conclusions}
In this work, we have derived exact finite-sample as well as asymptotic large-sample expressions for the statistical variance of covariance function estimates for mixed-spectrum signals. As has been shown, the statistical properties of such estimates differ considerably depending on how the singular part of the spectrum is modeled. Furthermore, for singular approximations of processes with continuous spectra, we have presented asymptotic regime results for the covariance estimator variance when both the measurement time and the number of approximating components tend to infinity. As has been illustrated, the difference in variability of the covariance estimates corresponding to the different approximations have a considerable impact on the statistical performance of array processing algorithms.
%
%
\appendix
\section{Proofs}
%
\begin{proof}[Proof of Theorem \ref{thm:auto_cov_general}]
As to simplify notation, let $x(t) = x_a(t) + x_s(t)$, where $x_s$ denotes the sinusoidal part of $x$. Furthermore, let $\cov_a$ and $\cov_s$ be the covariance functions of $x_a$ and $x_s$, respectively. Then, $\expop\left( \covest_x(\tau;T) \right) = \cov_x(\tau) = \cov_a(\tau) + \cov_s(\tau)$. Furthermore, 
\begin{align*}
	&\expect{\abs{\covest_x(\tau;T)}^2} \\
	&=\expop\left[\left(\frac{1}{T}\int_{t=0}^T x(t)\overline{x(t-\tau)}dt\right)\left(\frac{1}{T}\int_{t=0}^T \overline{x(t)}x(t-\tau)dt\right)\right] \\
	&= \frac{1}{T^2}\int_{t=0}^T\int_{\sigma=0}^T \expect{ x(t)\overline{x(\sigma)}x(\sigma-\tau)\overline{x(t-\tau)}}dt d\sigma.
\end{align*}
As $x_a$ and $x_s$ are independent, expanding the product yields
\begin{align*}
&\expect{ x(t)\overline{x(t-\tau)}\; \overline{x(\sigma)}x(\sigma-\tau)}\\
&=\expect{ x_a(t)\overline{x_a(t-\tau)}\; \overline{x_a(\sigma)}x_a(\sigma-\tau)}\\
&\quad+\expect{ x_s(t)\overline{x_s(t-\tau)}\; \overline{x_s(\sigma)}x_s(\sigma-\tau)}\\
&\quad+\expect{ x_a(t)\overline{x_a(t-\tau)}}\; \expect{\overline{x_s(\sigma)}x_s(\sigma-\tau)}\\
&\quad+\expect{x_s(t)\overline{x_s(t-\tau)}}\; \expect{\overline{x_a(\sigma)}x_a(\sigma-\tau)}\\
&\quad+\expect{x_a(t)\overline{x_a(\sigma)}}\; \expect{\overline{x_s(t-\tau)}x_s(\sigma-\tau)}\\
&\quad+\expect{ x_s(t)\overline{x_s(\sigma)}}\; \expect{\overline{x_a(t-\tau)}x_a(\sigma-\tau)}\\
&=\expect{ x_a(t)\overline{x_a(t-\tau)}\; \overline{x_a(\sigma)}x_a(\sigma-\tau)}\\
&\quad+\expect{ x_s(t)\overline{x_s(t-\tau)}\; \overline{x_s(\sigma)}x_s(\sigma-\tau)}\\
&\quad+\cov_a(\tau)\cov_s(-\tau)+\cov_s(\tau)\cov_a(-\tau) \\
&\quad+ \cov_a(t-\sigma)\cov_s(\sigma-t) + \cov_s(t-\sigma)\cov_a(\sigma-t) \\
&=\expect{ x_a(t)\overline{x_a(t-\tau)}\; \overline{x_a(\sigma)}x_a(\sigma-\tau)}\\
&\quad+\expect{ x_s(t)\overline{x_s(t-\tau)}\; \overline{x_s(\sigma)}x_s(\sigma-\tau)}\\
&\quad+\abs{\cov_a(t-\sigma)+\cov_s(t-\sigma)}^2 + \abs{\cov_a(\tau)+\cov_s(\tau)}^2 \\
&\quad - \abs{r_a(t-\sigma)}^2 - \abs{r_s(t-\sigma)}^2- \abs{r_a(\tau)}^2 - \abs{r_s(\tau)}^2.
\end{align*}
Furthermore, as $x_a$ is Gaussian, circularly symmetric, and zero-mean,
\begin{align*}
&\expect{ x_a(t)\overline{x_a(t-\tau)}\; \overline{x_a(\sigma)}x_a(\sigma-\tau)}\\
&=\expect{ x_a(t)\overline{x_a(t-\tau)}}\expect{ \overline{x_a(\sigma)}x_a(\sigma-\tau)}\\
&\quad+\expect{ x_a(t)\overline{x_a(\sigma)}}\; \expect{\overline{x_a(t-\tau)}x_a(\sigma-\tau)}\\
&=\cov_a(\tau)\cov_a(-\tau)+\cov_a(t-\sigma)\cov_a(\sigma-t) \\
&= \abs{\cov_a(\tau)}^2+\abs{\cov_a(t-\sigma)}^2.
\end{align*}
Thus,
\begin{equation} \label{eq:fourth_moment}
\begin{aligned}
&\expect{ x(t)\overline{x(t-\tau)}\; \overline{x(\sigma)}x(\sigma-\tau)}\\
&= \expect{ x_s(t)\overline{x_s(t-\tau)}\; \overline{x_s(\sigma)}x_s(\sigma-\tau)}\\
&\quad+\abs{\cov_x(t-\sigma)}^2 + \abs{\cov_x(\tau)}^2 - \abs{r_s(t-\sigma)}^2- \abs{r_s(\tau)}^2.
\end{aligned}
\end{equation}
To compute the fourth moment of $x_s$, consider four time points $t_1$, $t_2$, $t_3$, and $t_4$. Then,
\begin{align} \label{eq:fourth_term_stoch}
x_s(t_1)x_s(t_2)\overline{x_s(t_3)}\overline{x_s(t_4)}\!=\!\sum_{k,\ell,m,n}\! \randamp_k \randamp_\ell \overline{\randamp_m \randamp_n} e^{i\xi_{k,\ell,m,n}}
\end{align}
where 
\begin{align*}
	\xi_{k,\ell,m,n} \!=\! 2\pi(\freq_k t_1\!+\! \freq_\ell t_2\!-\! \freq_m t_3\!-\!\freq_n t_4),
\end{align*}
and where the superscript of $\randamp_k = \randamp_k^{(x)}$ has been suppressed for notational brevity. Since all amplitudes $\randamp_k$ are independent and circular symmetric, the expectation of the terms in \eqref{eq:fourth_term_stoch} are only non-zero when $k=m$ and $\ell=n$, or $k=n$ and $\ell=m$. Thus,
\begin{align*}
&\expect{ x_s(t_1)x_s(t_2)\overline{x_s(t_3)}\overline{x_s(t_4)}}\\
&\qquad\!=\!\sum_k\sum_\ell\!\xamp_k^2\xamp_\ell^2\!\Big(e^{i2\pi\freq_k (t_1-t_4)+i2\pi\freq_\ell(t_2-t_3)}\!\\
&\qquad+\!e^{i2\pi\freq_k (t_1-t_3)+i2\pi\freq_\ell(t_2-t_4)}\Big)\\
&\qquad+\sum_k (\kurtrat_k-2) \xamp_k^4 e^{i2\pi\freq_k (t_1+t_2-t_3-t_4)},
\end{align*}
where we recall that $\kurtrat_k = \expect{\abs{z_k}^4}/\expect{\abs{z_k}^2}^2$. Plugging in the corresponding time lags $t_1=t$, $t_2=\sigma-\tau$, $t_3=t-\tau$, and $t_4=\sigma$, the double sum becomes
\begin{align*}
&\sum_k\sum_\ell\!\xamp_k^2\xamp_\ell^2\!\left(e^{i2\pi\freq_k (t-\sigma)+i2\pi\freq_\ell(\sigma-t)}\!+\!e^{i2\pi\freq_k (\tau)+i2\pi\freq_\ell(-\tau)}\right)\\
&= \sum_k \xamp_k^2e^{i2\pi\freq_k (t-\sigma)}\sum_\ell\!\xamp_\ell^2e^{i2\pi\freq_\ell(\sigma-t)} \\
&\quad+\sum_k \xamp_k^2e^{i2\pi\freq_k \tau}\sum_\ell\!\xamp_\ell^2e^{-i2\pi\freq_\ell \tau} \\
&= \abs{\cov_s(t-\sigma)}^2 + \abs{\cov_s(\tau)}^2.
\end{align*}
Noting that $t_1+t_2+t_3+t_4 = 0$, we get
\begin{align*}
&\expect{ x_s(t)\overline{x_s(t-\tau)}\overline{x_s(\sigma)}x_s(\sigma-\tau)} \\
&\quad= \abs{\cov_s(t-\sigma)}^2 + \abs{\cov_s(\tau)}^2+\sum_k (\kurtrat_k-2)\xamp_k^4.
\end{align*}
Inserting this expression in \eqref{eq:fourth_moment} yields
\begin{equation*}
\begin{aligned}
&\expect{ x(t)\overline{x(t-\tau)}\; \overline{x(\sigma)}x(\sigma-\tau)} \\
&\qquad= \abs{\cov_x(t-\sigma)}^2 + \abs{\cov_x(\tau)}^2 + \sum_k (\kurtrat_k-2)\xamp_k^4.
\end{aligned}
\end{equation*}
Thus, since $\expect{\covest_{x}(\tau;T) } = \cov_{x}(\tau)$, the variance is
\begin{align*}
&\var{\covest_{x}(\tau;T)} = \expect{\abs{\covest_{x}(\tau;T)}^2} -\abs{\cov_{x}(\tau)}^2\\
 &= \frac{1}{T^2}\int_{t=0}^T\int_{\sigma=0}^T\abs{\cov_x(t-\sigma)}^2dt d\sigma + \sum_k (\kurtrat_k-2)\xamp_k^4.
 \end{align*}
Finally, the integral is given by
\begin{align*}
&\frac{1}{T^2}\int_{t=0}^T\int_{\sigma=0}^T\cov_x(t-\sigma) \cov_x(\sigma-t)dt d\sigma \\
&=\frac{1}{T^2}\!\int_{t=0}^T\!\int_{\sigma=0}^T\!\int_\theta \!e^{i2\pi\theta(t-\sigma)}d\mu_x(\theta) \!\!\int_\phi e^{i2\pi\phi(\sigma-t)}d\mu_x(\phi)dt d\sigma\\
&=\frac{1}{T^2}\int_\theta\int_\phi\left(\int_{t=0}^T\int_{\sigma=0}^T e^{i2\pi(\theta-\phi)(t-\sigma)}dt d\sigma\right)d\mu_x(\theta)  d\mu_x(\phi)\\
&=\frac{1}{T}\int_\theta\int_\phi\left(\int_{t=-T}^T (1-|t|/T) e^{i2\pi(\theta-\phi)t}dt\right)d\mu_x(\theta)  d\mu_x(\phi)\\
&= \frac{1}{T}\int_\theta\int_\phi \kernel_T(\theta-\phi)d\mu_x(\theta)  d\mu_x(\phi).
\end{align*}
\end{proof}
%
%
%
%
\begin{proof}[Proof of Theorem \ref{thm:cross_cov}]

First note that
\begin{align*}
 &\expect{\abs{\covest_{xy}(\tau;T)}^2} \\
&= \expop\left[\left(\frac{1}{T}\int_{t=0}^T x(t)\overline{y(t-\tau)}dt\right)\left(\frac{1}{T}\int_{\sigma=0}^T \overline{x(\sigma)}y(\sigma-\tau)d\sigma\right)\right]\\
&=\frac{1}{T^2}\int_{t=0}^T\int_{\sigma=0}^T\expect{x(t)\overline{x(\sigma)}}\expect{y(\sigma-\tau)\overline{y(t-\tau)}}dt d\sigma\\
&+\frac{1}{T^2}\int_{t=0}^T\int_{\sigma=0}^T \expect{x(t)\overline{y(t-\tau)}}\expect{y(\sigma-\tau)\overline{x(\sigma)}}dt d\sigma\\
&= \frac{1}{T^2}\int_{t=0}^T\int_{\sigma=0}^T\left(\cov_x(t-\sigma) \cov_y(\sigma-t)+r_{xy}(\tau)r_{yx}(\tau)\right)dt d\sigma\\
&= \frac{1}{T^2}\int_{t=0}^T\int_{\sigma=0}^T\cov_x(t-\sigma) \cov_y(\sigma-t)dt d\sigma+ |r_{xy}(\tau)|^2.
\end{align*}
Next, since
$\expect{\covest_{xy}(\tau;T) } = \cov_{xy}(\tau)$, the variance is
\begin{align*}
&\var{\covest_{xy}(\tau;T)} = \expect{\abs{\covest_{xy}(\tau;T)}^2} -\abs{\expect{\covest_{xy}(\tau;T)}}^2\\
&= \expect{\abs{\covest_{xy}(\tau;T)}^2} -\abs{\cov_{xy}(\tau)}^2\\
&= \frac{1}{T^2}\int_{t=0}^T\int_{\sigma=0}^T\cov_x(t-\sigma) \cov_y(\sigma-t)dt d\sigma \\
&=\frac{1}{T^2}\int_{t=0}^T\int_{\sigma=0}^T\int_\theta e^{i2\pi\theta(t-\sigma)}d\mu_x(\theta) \int_\phi e^{i2\pi\phi(\sigma-t)}d\mu_y(\phi)dt d\sigma\\
&=\frac{1}{T^2}\int_\theta\int_\phi\left(\int_{t=0}^T\int_{\sigma=0}^T e^{i2\pi(\theta-\phi)(t-\sigma)}dt d\sigma\right)d\mu_x(\theta)  d\mu_y(\phi)\\
&=\frac{1}{T}\int_\theta\int_\phi\left(\int_{t=-T}^T (1-|t|/T) e^{i2\pi(\theta-\phi)t}dt\right)d\mu_x(\theta)  d\mu_y(\phi)\\
&= \frac{1}{T}\int_\theta\int_\phi \kernel_T(\theta-\phi)d\mu_x(\theta)  d\mu_y(\phi).
\end{align*}
\end{proof}
%
%
%
\begin{proof}[Proof of Corollary \ref{cor:asymptote}]
We have that
\begin{align*}
	 \int_\theta\int_\phi &\kernel_T(\theta-\phi) d\mu_x(\theta)  d\mu_y(\phi) \\
	 =&  \int_\theta\int_\phi \kernel_T(\theta-\phi) \Phi_x(\theta) \Phi_y(\phi)d\theta d\phi \\
	 &\quad+ \sum_k \xamp_k^2 \int_\theta\int_\phi \kernel_T(\theta-\phi) \delta_{\freq_k^x}(\theta)\Phi_y(\phi) d\theta d\phi \\
	 &\quad+ \sum_\ell \yamp_\ell^2 \int_\theta\int_\phi \kernel_T(\theta-\phi) \delta_{\freq_\ell^y}(\phi)\Phi_x(\theta) d\theta d\phi \\
	 &\quad+ \sum_{k,\ell} \xamp_k^2 \yamp_\ell^2 \int_\theta\int_\phi \kernel_T(\theta-\phi) \delta_{\freq_k^x}(\theta)\delta_{\freq_\ell^y}(\phi) d\theta d\phi.
\end{align*}
First, note that as $\Phi_x, \Phi_y \in L_1$ and as $f_T$ is an approximate identity, it follows that $\kernel_T *\Phi_x \to \Phi_x$ and $\kernel_T *\Phi_y \to \Phi_y$ in $L_1$, as $T \to \infty$. Thus,
\begin{align*}
	\abs{\int_\theta\int_\phi \kernel_T(\theta-\phi) \Phi_x(\theta) \Phi_y(\phi)d\theta d\phi - \int_\theta \Phi_x(\theta) \Phi_y(\theta)d\theta} \to 0
\end{align*}
and as $\int_\theta  \kernel_T(\theta-\phi)\delta_{\freq_k^x}(\theta) d\theta = \kernel_T(\freq_k^x-\phi)$,
\begin{align*}
	\abs{\int_\theta\int_\phi \kernel_T(\theta-\phi) \delta_{\freq_k^x}(\theta)\Phi_y(\phi) d\theta d\phi - \Phi_y(\freq_k^x)} \to 0
\end{align*}
as $T \to \infty$. Finally, $\int_\theta\int_\phi \kernel_T(\theta-\phi) \delta_{\freq_k^x}(\theta)\delta_{\freq_\ell^y}(\phi) d\theta d\phi =  \kernel_T(\freq_k^x-\freq_\ell^y)$ and
\begin{align*}
	\abs{\kernel_T(\freq) - T \chi_{\{\freq = 0\}}} \to 0
\end{align*}
pointwise as $T\to \infty$. The statement of the proposition follows directly.
\end{proof}
%
%
%
%
%
\begin{proof}[Proof of Theorem \ref{thm:conv_in_dist}]
By Lemma~\ref{lemma:covariance_convergence}, it holds that the covariance function of $x^{(n)}$ converges to the covariance function of $x$. Thus, in order to prove the theorem, it is sufficient to show that $x^{(n)}$ converges in distribution to a Gaussian process.
Let $\boldsymbol{\tau} \in \RR^N$, for $N\in \RN$, be a set of sampling times, and let $X_k$ be the random vector defined as $X^{(n)}_k = \left[\begin{array}{ccc} x_k^{(n)}(\tau_1) & \ldots & x_k^{(n)}(\tau_N)\end{array}\right]^T$, where $x_k^{(n)}(t) = \sqrt{ \frac{B}{n}\Phi(\freq_k^{(n)})} e^{i2\pi\freq^{(n)}_k t}\randamp_k^{(n)}$. Furthermore, let $S_n = \sum_{k=1}^nX_k^{(n)}$. Then, as the vectors $X_k^{(n)}$ are independent, the covariance matrix of $S_n$ is given by
\begin{align*}
	\expop(S_nS_n^H) = \frac{B}{n} C_n \;,\; C_n = \sum_{k=1}^n \Phi(\freq_k^{(n)}) \fouriervec(\freq_k^{(n)})\fouriervec(\freq_k^{(n)})^H,
\end{align*}
where $\fouriervec: [-\pi,\pi) \to \RC^N$ is the sub-sampled Fourier vector corresponding to the sampling times $\boldsymbol{\tau}$. Then,
\begin{align*}
	\expop(S_nS_n^H)^{-1/2}X_k^{(n)} = \randamp_k^{(n)}\sqrt{\Phi(\freq_k^{(n)})}C_ n^{-1/2} \fouriervec(\freq_k^{(n)})
\end{align*}
and 
\begin{align*}
	\norm{\expop(S_nS_n^H)^{-1/2}X_k^{(n)}}_2^2 = \abs{\randamp_k^{(n)}}^2\Phi(\freq_k^{(n)})\fouriervec(\freq_k^{(n)}) C_ n^{-1} \fouriervec(\freq_k^{(n)}).
\end{align*}
As $\Phi$ is of bounded variation, and as $\fouriervec$ is a continuous function defined on a compact set, for $n > M$ for some finite $M$,
\begin{align*}
	C_{2n} \approx 2 C_n.
\end{align*}
Thus, for large $n$,
\begin{align*}
	\fouriervec(\freq_k^{(n)}) C_ n^{-1} a(\freq_k^{(n)}) \leq c/n
\end{align*}
where $c$ is a constant not depending on $n$ or $k$. Then,
\begin{align*}
	&\sum_{k=1}^n \expect{\norm{\expop(S_nS_n^H)^{-1/2}X_k^{(n)}}_2^3} \\
	&\leq  \sum_{k=1}^n \expect{|\randamp_k^{(n)}|^3}\Phi(\freq_k^{(n)})^{3/2} (c/n)^{3/2} \\
	&\leq \mu_3\frac{c^{3/2}}{\sqrt{n}} \max_\freq \Phi(\freq)^{3/2}\to 0,
\end{align*}
when $n\to \infty$, as $\mu_3 \triangleq \expect{|\randamp_k^{(n)}|^3}$ is finite by assumption of a finite fourth moment, and as $\Phi$ is bounded. According to the Lyapunov-type central limit theorem \cite{Bentkus05_49,Raic18_arxiv}, $S_n$ then converges in distribution to a Gaussian distribution as $n \to \infty$. This holds for any finite sample length $N$, with the requirement $N < n$ for invertibility of $C_n$. The statement of the theorem then follows directly.
\end{proof}
%
\begin{lemma} \label{lemma:covariance_convergence}
Let $\Phi$ be a continuous spectrum $\Phi\in C(\freqband)$ with support $\freqband = [\freq_c - B/2,\freq_c+B/2]$, where $\freq_c$ is the center frequency and $B$ is the bandwidth. Consider the sequence of stochastic processes
\begin{align*}
	x^{(n)}(t) = \sum_{k=1}^n \sqrt{ \frac{B}{n}\Phi(\freq_k^{(n)})} e^{i2\pi\freq^{(n)}_k t} \randamp_k^{(n)},
\end{align*}
where $\freq_k^{(n)}$ defines a uniform grid on $\mathcal{I}_B$, and where $\randamp_k^{(n)}$ are independent zero-mean stochastic variables such that $\expect{\abs{\randamp_k^{(n)}}^2} = 1$. Then, as $n\to \infty$, the covariance function of $x^{(n)}$ converges to the covariance function defined by $\Phi$.
\end{lemma}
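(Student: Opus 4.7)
The plan is direct: compute the covariance function $\cov_{x^{(n)}}(\tau) \triangleq \expect{x^{(n)}(t)\overline{x^{(n)}(t-\tau)}}$ explicitly, then recognize the result as a Riemann sum converging to $\cov_x(\tau)$.

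First I would substitute the definition of $x^{(n)}$ into the expectation and expand the double sum produced by the product $x^{(n)}(t)\overline{x^{(n)}(t-\tau)}$. Pulling the expectation inside the sum and using that the amplitudes $\{\randamp_k^{(n)}\}_{k=1}^n$ are independent, zero-mean, and satisfy $\expect{\abs{\randamp_k^{(n)}}^2}=1$, we get $\expect{\randamp_k^{(n)}\overline{\randamp_\ell^{(n)}}} = \delta_{k\ell}$, so all off-diagonal terms vanish. The oscillatory factors $e^{i2\pi\freq_k^{(n)}t}\,\overline{e^{i2\pi\freq_k^{(n)}(t-\tau)}}$ collapse to $e^{i2\pi\freq_k^{(n)}\tau}$, leaving
\begin{align*}
\cov_{x^{(n)}}(\tau) = \sum_{k=1}^n \frac{B}{n}\Phi\!\left(\freq_k^{(n)}\right) e^{i2\pi\freq_k^{(n)}\tau}.
\end{align*}

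Next I would recognize the right-hand side as a standard Riemann sum with mesh $B/n$ on the uniform grid $\{\freq_k^{(n)}\}$ for the integral
\begin{align*}
\cov_x(\tau) = \int_{\freqband}\Phi(\freq)\,e^{i2\pi\freq\tau}\,d\freq,
\end{align*}
which is precisely the covariance function induced by the spectrum $\Phi$. Since $\Phi \in C(\freqband)$ and $\freqband$ is compact, the integrand $\freq \mapsto \Phi(\freq)e^{i2\pi\freq\tau}$ is continuous on a compact interval and hence Riemann integrable, so the sum converges to the integral as $n\to\infty$, giving $\cov_{x^{(n)}}(\tau) \to \cov_x(\tau)$ for every $\tau \in \RR$. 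One can moreover observe that the convergence is uniform on compact sets in $\tau$, via the uniform continuity of $\Phi$ on $\freqband$ and the fact that the oscillatory factor has modulus one, but this is not required by the statement.

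There is essentially no obstacle here; the only step requiring a bit of care is the vanishing of off-diagonal terms in the expectation, which uses only independence and the zero-mean assumption and in particular requires no higher-moment condition on the $\randamp_k^{(n)}$. This is what makes the lemma applicable to the entire amplitude family considered in \eqref{eq:general_model}, including both \eqref{eq:random_amplitude_model} and \eqref{eq:random_phase_model}.
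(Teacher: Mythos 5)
Your proposal is correct and follows the same route as the paper: compute the covariance, use independence and the zero-mean/unit-variance assumptions to reduce it to the Riemann sum $\sum_k \frac{B}{n}\Phi(\freq_k^{(n)})e^{i2\pi\freq_k^{(n)}\tau}$, and invoke continuity of $\Phi$ on the compact interval $\freqband$ for pointwise convergence to the integral. The extra detail you supply on the vanishing of cross terms is a step the paper leaves implicit, but the argument is identical.
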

\begin{proof}
We have
\begin{align*}
	\expop\left(x^{(n)}(t) \overline{x^{(n)}(t-\tau)}\right) = \sum_{k=1}^n \frac{B}{n}\Phi(\freq_k^{(n)})e^{i2\pi\freq^{(n)}_k \tau}.
\end{align*}
Then, as $\Phi$ is continuous on a compact interval, the Riemann sum on the right-hand side converges point-wise, i.e., for every lag $\tau$,
\begin{align*}
	\sum_{k=1}^n \frac{B}{n}\Phi(\freq_k^{(n)})e^{i2\pi\freq^{(n)}_k \tau} \to \int_{\mathcal{I}_B} \Phi(\freq)e^{i2\pi\freq\tau}d\freq,
\end{align*}
which is the covariance function associated with $\Phi$.
\end{proof}
%
%
%
\begin{proof}[Proof of Theorem~\ref{thm:approximation_variance}]
For the approximations in Theorem~\ref{thm:conv_in_dist}, the approximating spectra are of the form
\begin{align*}
	d\mu^{(n)}(\freq) &= \sum_{k=1}^n  \frac{B}{n}\breve{\Phi}\left(\frac{B}{n}k\right) \delta\left(\freq-\frac{B}{n}k\right)\\
	&=\sum_{k=1}^n  \frac{B}{n}\breve{\Phi}\left(\omega\right) \delta\left(\freq-\frac{B}{n}k\right)
\end{align*}
where, for notational convenience, $\breve{\Phi}(\theta) \triangleq \Phi(\theta+\freq_c-B/2)$.
Then,
\begin{align*}
	 \int_\phi\!\! f_T(\theta\!-\!\phi)d\mu^{(n)}\!(\phi) \!&= \int_\phi f_T(\phi)d\mu^{(n)}(\theta-\phi) \\
	 &=\! \frac{B}{n} \sum_{m=-\infty}^\infty \kernel_T\left(\frac{Bm}{n}\right) \breve{\Phi}\left(\theta-\frac{Bm}{n}\right) \\
	 &= \!\!\int_{\phi} \!\sum_{m=-\infty}^\infty\!\!\! \frac{B}{n}\!\kernel_T\!\left(\phi\right)\!\delta\!\left(\!\!\phi\!-\!\frac{Bm}{n}\!\right)\! \breve{\Phi}(\theta\!-\!\phi)d\phi \\
	 &=\!\! \int_\phi \kernel_T^{(n)}(\phi)\breve{\Phi}(\theta\!-\!\phi)d\phi,
\end{align*}
where the summation limits in the second equality follows as the support of $\Phi$ is limited to $\freqband$, and where
\begin{align*}
	\kernel_T^{(n)}(\phi) \triangleq \sum_{m=-\infty}^\infty \kernel_T(\phi) \delta\left(\phi - \frac{Bm}{n}\right)\frac{B}{n}.
\end{align*}
According to Lemma~\ref{lemma:summed_kernel}, $\frac{1}{\asympratio+\rho(\asympratio)}\kernel_T^{(n)}$ acts as an approximate identity as $n \to \infty$. Thus, for fixed $\asympratio$,
\begin{align*}
	 \int_\phi f_T(\theta-\phi)d\mu^{(n)}(\phi) \to (\asympratio+\rho(\asympratio)) \breve{\Phi}(\theta).
\end{align*}
Then, as
\begin{align*}
	 \int_{\freqband}\breve{\Phi}(\theta)d\mu^{(n)}(\theta) &= \frac{B}{n} \sum_{k=1}^n\left(\breve{\Phi}\left(\frac{Bk}{n}\right)\right)^2 \to \int_{\freqband} \breve{\Phi}(\theta)^2d\theta,
\end{align*}
we have
\begin{align*}
	 \int\!\!\!\int f_T(\theta-\phi)d\mu^{(n)}(\phi)d\mu^{(n)}(\theta) \to (\asympratio+\rho(\asympratio)) \int_{\freqband} \Phi(\theta)^2d\theta.
\end{align*}
The statement of the theorem then follows directly from Theorem~\ref{thm:auto_cov_general}.
\end{proof}
%
%
%
\begin{lemma}\label{lemma:summed_kernel}
Let $\asympratio = \frac{BT}{n}$ be fixed. Then, the function
\begin{align*}
	\frac{1}{\asympratio+\rho(\asympratio)} \kernel_T^{(n)}(\phi) = \frac{1}{\asympratio+\rho(\asympratio)}\sum_{m=-\infty}^\infty \kernel_T(\phi) \delta\left(\phi - \frac{Bm}{n}\right)\frac{B}{n},
\end{align*}
parametrized by $n$, is an approximate identity, i.e., for any $\Phi \in L_1$, $\frac{1}{\asympratio+\rho(\asympratio)} \kernel_T^{(n)} * \Phi \to \Phi$ in $L_1$ as $n \to \infty$.
\end{lemma}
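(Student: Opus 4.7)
The plan is to exploit a self-similarity of the Fej\'er kernel under the rescaling $\asympratio = BT/n$ and then reduce the claim to the standard continuity of translation in $L_1(\RR)$.

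First, from \eqref{eq:fejer} a direct calculation gives the scaling identity $\kernel_T(Bm/n)\cdot(B/n) = \kernel_\asympratio(m)$ for every integer $m$, with both sides equal to $\asympratio$ at $m=0$. This rewrites the discretized kernel as
\begin{align*}
    \kernel_T^{(n)}(\phi) = \sum_{m=-\infty}^\infty \kernel_\asympratio(m)\,\delta\!\left(\phi - \tfrac{Bm}{n}\right),
\end{align*}
whose total mass equals $\sum_m \kernel_\asympratio(m) = \asympratio + \rho(\asympratio)$, the proposed normalization, by the identity stated in the remark following Theorem~\ref{thm:approximation_variance}.

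Second, for $\Phi\in L_1$ I would expand the convolution as $(\kernel_T^{(n)}*\Phi)(\theta) = \sum_m \kernel_\asympratio(m)\,\Phi(\theta - Bm/n)$; the interchange of sum and integral is justified by Tonelli since $\sum_m \kernel_\asympratio(m)\|\Phi\|_{L_1} = (\asympratio+\rho(\asympratio))\|\Phi\|_{L_1} < \infty$. Subtracting $\Phi$, taking $L_1$ norms in $\theta$, and applying the triangle inequality then gives
\begin{align*}
    \left\|\frac{\kernel_T^{(n)}*\Phi}{\asympratio+\rho(\asympratio)} - \Phi\right\|_{L_1} \le \frac{1}{\asympratio+\rho(\asympratio)}\sum_{m} \kernel_\asympratio(m)\,\left\|\Phi\!\left(\cdot - \tfrac{Bm}{n}\right) - \Phi\right\|_{L_1}.
\end{align*}

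For each fixed $m$, continuity of translation in $L_1(\RR)$ yields $\|\Phi(\cdot - Bm/n) - \Phi\|_{L_1} \to 0$ as $n\to\infty$, while each summand is uniformly dominated by the summable sequence $2\|\Phi\|_{L_1}\kernel_\asympratio(m)$. Discrete dominated convergence over $m$ therefore forces the right-hand side to vanish as $n\to\infty$, establishing the claim. The only step with genuine content beyond bookkeeping is the normalization identity $\sum_m \kernel_\asympratio(m) = \asympratio + \rho(\asympratio)$, which I would prove once via Poisson summation applied to $\kernel_\asympratio$, the Fourier transform of the tent function $(1-|t|/\asympratio)_+ \chi_{[-\asympratio,\asympratio]}(t)$: summing the tent at the integers yields $1 + 2\sum_{k=1}^{\lfloor\asympratio\rfloor}(1 - k/\asympratio) = \asympratio + \breve{\asympratio}(1-\breve{\asympratio})/\asympratio$, matching the definition of $\rho(\asympratio)$.
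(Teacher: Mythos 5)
Your proof is correct, and it takes a genuinely different route from the paper's in two respects. First, you obtain the normalization constant $\sum_m \kernel_\asympratio(m) = \asympratio + \rho(\asympratio)$ by Poisson summation applied to the tent function $(1-|t|/\asympratio)_+$, whereas the paper evaluates the series $\sum_{m\ge1}(1-\cos(2\pi\asympratio m))/m^2$ in closed form via the dilogarithm and the Bernoulli polynomial $B_2$. The two computations agree (one can check $1+2\sum_{k=1}^{\lfloor\asympratio\rfloor}(1-k/\asympratio) = \asympratio + \breve\asympratio(1-\breve\asympratio)/\asympratio$ directly), but yours is shorter and makes transparent why $\rho(\asympratio)=0$ at integer $\asympratio$: the tent then samples to exactly $\asympratio$. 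Second, for the convergence itself the paper verifies the abstract approximate-identity conditions (correct total mass plus vanishing mass outside $[-\epsilon,\epsilon]$, using $\kernel_\asympratio(\tfrac{n}{B}\phi)=\tfrac{B}{n}\kernel_T(\phi)\to0$) and then appeals to the standard theorem, while you prove the $L_1$ convergence directly: writing $\kernel_T^{(n)}*\Phi$ as the weighted sum of translates $\sum_m\kernel_\asympratio(m)\Phi(\cdot-Bm/n)$ with $n$-independent weights, invoking continuity of translation in $L_1$, and closing with dominated convergence over $m$. Your direct argument has the minor additional virtue of sidestepping the question of what the approximate-identity theorem means for a sequence of purely atomic measures, since $\kernel_T^{(n)}$ is a Dirac comb rather than an $L_1$ kernel; the only small imprecision is the appeal to ``Tonelli'' for a signed or complex $\Phi$, where you mean Tonelli applied to $|\Phi|$ followed by Fubini, which is exactly the bound you state.
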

\begin{proof}
Firstly, note that for any $T> 0$, $\kernel_T(\theta) = T \kernel_1(T\theta)$. We have
\begin{align*}
	\sum_{m=-\infty}^\infty\!\kernel_T(\phi) \delta\!\left(\phi \!-\! \frac{Bm}{n}\right)\!\frac{B}{n}\!&=\!\sum_{m=-\infty}^\infty\!\kernel_{\frac{BT}{n}}\!\left(\frac{n}{B}\phi\right)\!\delta\!\left(\!\phi\!-\!\frac{Bm}{n}\!\right) \\
	&=\!\sum_{m=-\infty}^\infty\!\kernel_{\asympratio}\!\left(\frac{n}{B}\phi\right)\!\delta\left(\!\phi\!-\!\frac{Bm}{n}\!\right)\!.
\end{align*}
Then,
\begin{align*}
	\int_{-\infty}^\infty \kernel_T^{(n)}(\phi) d\phi &= \sum_{m=-\infty}^\infty \kernel_\asympratio(m) = \kernel_\asympratio(0) + 2\sum_{m=1}^\infty \kernel_\asympratio(m) \\
	&= \asympratio + \frac{4}{\asympratio (2\pi)^2}\sum_{m=1}^\infty \frac{1-\cos(2\pi \asympratio m)}{m^2}.
\end{align*}
Clearly, for $\asympratio \in \RN$, all terms in the series are zero. For non-integer $\asympratio$, let $\breve{\asympratio} = \asympratio -  \lfloor\asympratio\rfloor$, with $\lfloor\asympratio\rfloor$ denoting the integer part of $\asympratio$. Then,
\begin{align*}
	&\int_{-\infty}^\infty \kernel_T^{(n)}(\phi) d\phi \\
	&= \asympratio + \frac{1}{\asympratio \pi^2}\left( \frac{\pi^2}{6} - \frac{1}{2}\left( \mathrm{Li}_{2}(e^{2i\pi\asympratio}) + \mathrm{Li}_{2}(e^{-2i\pi\asympratio})\right) \right) \\
	&= \asympratio + \frac{1}{\asympratio \pi^2}\left( \frac{\pi^2}{6} - \frac{1}{2}\left( \mathrm{Li}_{2}(e^{2i\pi\breve{\asympratio}}) + \mathrm{Li}_{2}(e^{-2i\pi\breve{\asympratio}})\right) \right) \\
	&= \asympratio + \frac{1}{\asympratio \pi^2}\left( \frac{\pi^2}{6} - \frac{1}{2}\left( -\frac{(2i\pi)^2}{2!}B_2(\breve{\asympratio})\right) \right) \\
	&= \asympratio +\frac{ \breve{\asympratio}- \breve{\asympratio}^2}{\asympratio} = \asympratio+\rho(\asympratio),
\end{align*}
where $\mathrm{Li}_{2}$ is the polylogarithm and $B_2$ is the Bernoulli polynomial $B_2(x) = x^2-x+\frac{1}{6}$. Furthermore, letting $n\to \infty$, and thereby also $T\to\infty$ as $\asympratio$ is fixed, $\kernel_{\asympratio}(\frac{n}{B}\phi) = \frac{B}{n}\kernel_{T}(\phi) \to 0$ for $\abs{\phi}>0$, implying
\begin{align*}
	\int_{\phi \notin [-\epsilon,\epsilon]} \kernel_T^{(n)}(\phi) d\phi \to 0
\end{align*}
for any $\epsilon > 0$. The statement of the lemma follows.
\end{proof}
\balance
\bibliographystyle{plain}
\bibliography{ElvanderK21_SP_GENERAL_spformat_ARXIV.bbl}
%
%

%
%
\end{document}